\documentclass{article}
% --- Packages (conflict-free with autart) ---
\usepackage{graphicx}
\usepackage{amsmath,amssymb,mathtools}
\usepackage{booktabs}
\usepackage{makecell}
\usepackage{url}
\usepackage{mathptmx}
\usepackage{times}
% Fix autart clashes BEFORE loading these:
\makeatletter
      % autart defines \AND; algorithms use it too
\let\cl@part\relax  % avoids cleveref recursion on \cl@part
\makeatother
\usepackage{algorithm}        % float wrapper
\usepackage{algpseudocode}    % algorithmicx pseudocode (no 'algorithmic' pkg)
% Compat shim: map legacy \STATE/\FOR/... to algpseudocode macros
\makeatletter
\@ifpackageloaded{algpseudocode}{%
  \providecommand{\STATE}{\State}
  \providecommand{\FOR}{\For}
  \providecommand{\ENDFOR}{\EndFor}

}{}
\makeatother
% autart + amsthm conflict; do NOT load amsthm. Provide a simple proof env.
\newenvironment{proof}{\par\noindent\textbf{Proof.}\ }{\hfill$\square$\par}
% Natbib (numbers or author–year both fine; keep numbers since abbrvnat works)
\usepackage[numbers,sort&compress]{natbib}
% --- Aliascnt for shared counters with cleveref ---
\usepackage{aliascnt}
% --- Theorems with aliascnt (for cleveref compatibility) ---
\newtheorem{theorem}{Theorem}

\newaliascnt{lemma}{theorem}
\newtheorem{lemma}[lemma]{Lemma}
\aliascntresetthe{lemma}

\newaliascnt{proposition}{theorem}
\newtheorem{proposition}[proposition]{Proposition}
\aliascntresetthe{proposition}

\newaliascnt{corollary}{theorem}
\newtheorem{corollary}[corollary]{Corollary}
\aliascntresetthe{corollary}

\newaliascnt{conjecture}{theorem}

\aliascntresetthe{conjecture}

\newaliascnt{definition}{theorem}
\newtheorem{definition}[definition]{Definition}
\aliascntresetthe{definition}

\newaliascnt{remark}{theorem}
\newtheorem{remark}[remark]{Remark}
\aliascntresetthe{remark}

\newaliascnt{assumption}{theorem}
\newtheorem{assumption}[assumption]{Assumption}
\aliascntresetthe{assumption}

\newaliascnt{hypothesis}{theorem}

\aliascntresetthe{hypothesis}

\newaliascnt{property}{theorem}

\aliascntresetthe{property}

% --- Cleveref (MUST be loaded AFTER theorem definitions) ---
\usepackage[capitalize]{cleveref}
\Crefname{figure}{Fig.}{Figs.}
\crefname{lemma}{Lemma}{Lemmas}
\Crefname{lemma}{Lemma}{Lemmas}
\crefname{proposition}{Proposition}{Propositions}
\Crefname{proposition}{Proposition}{Propositions}
\crefname{corollary}{Corollary}{Corollaries}
\Crefname{corollary}{Corollary}{Corollaries}
\crefname{definition}{Definition}{Definitions}
\Crefname{definition}{Definition}{Definitions}
\crefname{remark}{Remark}{Remarks}
\Crefname{remark}{Remark}{Remarks}
\crefname{assumption}{Assumption}{Assumptions}
\Crefname{assumption}{Assumption}{Assumptions}
\crefname{conjecture}{Conjecture}{Conjectures}
\Crefname{conjecture}{Conjecture}{Conjectures}
\crefname{hypothesis}{Hypothesis}{Hypotheses}
\Crefname{hypothesis}{Hypothesis}{Hypotheses}
\crefname{property}{Property}{Properties}
\Crefname{property}{Property}{Properties}
% --- Your macros/includes ---

\usepackage{mpcsymbols}

\crefformat{equation}{(#2#1#3)}
% Fallbacks (harmless if autart already defines them)
\providecommand{\runauthor}[1]{}%
\providecommand{\runtitle}[1]{}%

% --- Document ---
\begin{document}

\title{Disturbance Attenuation Regulator I-A: Signal Bound Finite Horizon Solution\thanks{The authors gratefully acknowledge the financial support of the National Science Foundation (NSF) under Grant Nos. 2027091 and 2138985. The authors thank Moritz Diehl for helpful discussions.}}

\author{Davide Mannini\thanks{Department of Chemical Engineering, University of California, Santa Barbara. Email: dmannini@ucsb.edu} \and James B. Rawlings\thanks{Department of Chemical Engineering, University of California, Santa Barbara. Email: jbraw@ucsb.edu}}

\maketitle

\begin{abstract}
This paper develops a generalized finite horizon recursive solution to the discrete time signal bound disturbance attenuation regulator (SiDAR) for state feedback control. This problem addresses linear dynamical systems subject to signal bound disturbances, i.e., disturbance sequences whose squared signal two-norm is bounded by a fixed budget. The term generalized indicates that the results accommodate arbitrary initial states. By combining game theory and dynamic programming, we derive a recursive solution for the optimal state feedback policy valid for arbitrary initial states. The optimal policy is nonlinear in the state and requires solving a tractable convex scalar optimization for the Lagrange multiplier at each stage; the control is then explicit. For fixed disturbance budget $\alpha$, the state space partitions into two distinct regions: $\mathcal{X}_L(\alpha)$, where the optimal control policy is linear and coincides with the standard linear $H_{\infty}$ state feedback control, and $\mathcal{X}_{NL}(\alpha)$, where the optimal control policy is nonlinear. We establish monotonicity and boundedness of the associated Riccati recursions and characterize the geometry of the solution regions. A numerical example illustrates the theoretical properties.

This work provides a complete feedback solution to the finite horizon SiDAR for arbitrary initial states. Companion papers address the steady-state problem and convergence properties for the signal bound case, and the stage bound disturbance attenuation regulator (StDAR).
\end{abstract}

\section{Introduction}
\label{sec:intro}
The disturbance attenuation regulator (DAR), also known as the sensitivity minimization problem, is a deterministic game-theoretic robust control for systems affected by exogenous bounded disturbances.  In this framework, the control design seeks to ensure that the closed-loop system maintains good performance, i.e., low cost, despite any unknown but bounded disturbance. Specifically, the problem is structured as a sequential dynamic noncooperative zero-sum game (a Stackelberg game), i.e., a minmax optimization, where the disturbance (follower) optimizes first and the control (leader) optimizes second. Notably, solutions to such games need not satisfy strong duality nor correspond to stationary points, i.e., points in the domain of a function at which the gradient is zero.

The DAR has been formulated in two forms: the signal bound disturbance attenuation regulator (SiDAR), which constrains disturbances through a single bound over the entire time horizon, and the stage bound disturbance attenuation regulator (StDAR), which constrains disturbances independently at each time step.

The intellectual ancestor of the DAR is Bulgakov's \emph{disturbance accumulation problem} \citep{bulgakov:1946} from the 1940s, which asked for the maximal terminal state deviation under stagewise input bounds.  
Although not a game, Bulgakov's constrained maximization, widely studied in the Soviet/Russian literature and popularly known as the \textit{Bulgakov's problem} to these days, laid the groundwork for later game-theoretic stage bound disturbance attenuation treatments in the Soviet/Russian literature, but remained mostly ignored or unknown in the West.

The first game versions of the DAR appeared in the early 1960s: Gadzhiev \citep{gadzhiev:1962}, who obtained a nonlinear control policy for linear systems, treated the signal bound case, while Stikhin \citep{stikhin:1963} addressed the stage bound case.  
%Dorato and Drenick subsequently introduced these ideas to the Western community \citep{dorato:drenick:1966}. Despite an intense burst of largely independent activity on both sides of the Iron Curtain in the 1960s and 1970s \citep{koivuniemi:1966,ragade:sarma:1967,salmon:1968,rhodes:luenberger:1969,kimura:1970,medanic:andjelic:1971,ulanov:1971,bertsekas:rhodes:1973,yakubovich:1975,barabanov:granichin:1984}, progress soon stalled: even for linear systems the resulting minmax optimizations exhibit only weak duality for cases of interest and global minmax solutions need not be local minmax solutions, as Witsenhausen noted in a seminal work of the period \citep{witsenhausen:1968}. Consequently, a complete solution for both disturbance models remained elusive.
Dorato and Drenick subsequently introduced these ideas to the Western community \citep{dorato:drenick:1966}. Despite an intense burst of largely independent activity in both Eastern and Western research communities in the 1960s and 1970s \citep{koivuniemi:1966,ragade:sarma:1967,salmon:1968,rhodes:luenberger:1969,kimura:1970,medanic:andjelic:1971,ulanov:1971,bertsekas:rhodes:1973,yakubovich:1975,barabanov:granichin:1984}, progress soon stalled: even for linear systems the resulting minmax optimizations exhibit only weak duality for cases of interest. Standard gradient based optimization algorithms face fundamental difficulties in these problems because domain restrictions during iterative search, such as trust regions, may inadvertently exclude solution branches, preventing convergence to the global optimum even when part of the solution lies within the search region, as Witsenhausen observed \citep{witsenhausen:1968}. Consequently, a complete solution for either disturbance model remained elusive.

Interest reignited in the 1980s when Zames cast the signal bound problem in the frequency domain as the $H_{\infty}$ norm minimization of a disturbance to output transfer function matrix \citep{zames:1981}.  
Glover and Doyle \citep{glover:doyle:1988} translated that formulation back to the time domain, deriving dual Riccati recursions for continuous time, output feedback systems with zero initial state.  Basar \citep{basar:1989a} subsequently provided a finite and infinite horizon recursive dynamic game derivation (again assuming zero initial state), and Vidyasagar \citep{vidyasagar:1986} extended the framework to stage bound disturbances.  

Zames' frequency domain problem introduced the disturbance attenuation level $\gamma$ (a Lagrange multiplier analog), a device largely absent from the 1960s–1970s game-theoretic literature, but it also steered subsequent research toward zero initial state settings.  As a result, insights from the earlier game-theoretic line of work remained only weakly connected to the emerging $H_{\infty}$ theory. For historical accounts the reader may consult Dorato's review, which traces the development mainly through Western contributions~\citep{dorato:1987}, and the review by Khlebnikov, Polyak, and Kuntsevich, which focuses on the Soviet/Russian literature while still summarizing key Western results~\citep{khlebnikov:polyak:kuntsevich:2011}.

Didinsky and Basar \citep{didinsky:basar:1992} partially addressed nonzero initial states for the signal bound case by partitioning the state space into distinct solution regions, though their analysis relied on an auxiliary strongly dual reformulation and did not yield explicit solutions for the optimal control in all regions of the state space.  Khargonekar et al.~\citep{khargonekar:nagpal:poolla:1991} and Balandin et al.~\citep{balandin:kogan:krivdina:fedyukov:2014} considered nonzero initial states by introducing constraints combining disturbance norms with weighted quadratic functions of the initial state, but their formulations treat the initial state as a measured quantity rather than an uncertain parameter. Consequently, the resulting control law takes the form of linear state feedback in which the initial state is a measured or known quantity rather than an uncertain parameter.

To date there is \emph{no} direct feedback solution to the SiDAR that accommodates arbitrary initial states without appealing to auxiliary strongly dual problems. These gaps matter in practice: large setpoint changes or disturbances drive the system far from equilibrium where existing linear $H_{\infty}$ control is valid.

We close these gaps by deriving a generalized, finite horizon, recursive state feedback solution via dynamic programming for the SiDAR that:
\begin{itemize}
\item is valid for any initial condition
\item yields an optimal state feedback policy that is nonlinear in the state and requires solving a tractable convex scalar optimization for the Lagrange multiplier at each stage; the control gain is then explicit
\item reveals two qualitatively different regions: $\mathcal{X}_L(\alpha)$ where the optimal policy is linear and coincides with standard $H_{\infty}$ feedback, and $\mathcal{X}_{NL}(\alpha)$ where the policy is nonlinear
\end{itemize}
We expand the theory by proving monotonicity and boundedness of the Riccati recursion, characterizing the region geometry as ellipsoids centered at the origin, and establishing the derivative of the value function with respect to the Lagrange multiplier.

We pose the finite horizon SiDAR in \Cref{sec:setup}. \Cref{sec:signal} develops the solution and establishes monotonicity properties. The geometry of the solution regions is analyzed in \Cref{sec:control-region}. \Cref{sec:num-example} illustrates the theory with a numerical example, and \Cref{sec:end} summarizes the main findings. The appendix compiles fundamental propositions used throughout.

Companion papers address the steady-state problem and convergence properties for the signal bound case \citep{mannini:rawlings:2026b}, and the stage bound disturbance attenuation regulator (StDAR) \citep{mannini:rawlings:2026c}.

\textit{Notation:} Let $\bbR $ denote the reals and $\bbI$ the integers. $\mathbb{R}^{m \times n}$ denotes the space of $m \times n$ real matrices and $\mathbb{S}^n$ denotes the space of $n \times n$ real symmetric positive definite matrices.
The \(\norm{x}\) denotes the two-norm of vector \(x\); \(\norm{M}\) denotes the induced two-norm of matrix \(M\); \(\normf{M}_F\) denotes the Frobenius norm of matrix \(M\). For matrices $X, Y \in \mathbb{R}^{m \times n}$, the matrix inner product is $\langle X, Y \rangle \eqbyd \operatorname{Tr}(X'Y)$, and $\normf{M}_F = \sqrt{\langle M, M \rangle}$. For a symmetric matrix $A \in \mathbb{R}^{n \times n}$ with $A \succeq 0$, $A^{1/2}$ denotes the unique principal symmetric square root: $A^{1/2} \succeq 0$ and $(A^{1/2})^2 = A$. For $A \succ 0$, define $A^{-1/2} \eqbyd (A^{1/2})^{-1}$. For a symmetric matrix $\Gamma \succeq 0$, we may write $\Gamma = WW'$ where $W \eqbyd \Gamma^{1/2}$ denotes the principal square root unless stated otherwise; in general, such factorizations are not unique. For a vector $y \in \mathbb{R}^p$, let $\yseq$ denote a sequence $\yseq \eqbyd (y(0), y(1), \dots, y(N-1))$. The norm of a signal $\yseq$ is defined as $\smax{\yseq} \eqbyd ( \sum_{k=0}^{N-1} \norm{y(k)}^2 )^{1/2}$ for finite horizon and $\smax{\yseq} \eqbyd ( \sum_{k=0}^{\infty} \norm{y(k)}^2 )^{1/2}$ for infinite horizon; the one-norm of a sequence is defined as $\smax{\yseq}_1 \eqbyd \sum_{k=0}^{N-1} \norm{y(k)}$. The column space (range) and nullspace of a matrix $M$ are denoted by $\mathcal{R}(M)$ and $\mathcal{N}(M)$, respectively. The pseudoinverse of a matrix $M$ is denoted as $M^{\dagger}$. The determinant of a square matrix $M$ is denoted $\det M$, and the adjugate (classical adjoint) is denoted $\mathrm{adj}(M)$. For symmetric matrices $A$ and $B$, $A \succeq B$ denotes $A - B$ is positive semidefinite (the Loewner order); a minimal solution refers to the smallest solution in the Loewner order.
\section{SiDAR Set Up}
\label{sec:setup}
Consider the following discrete time system
\begin{equation}
    x^+ = Ax + Bu + Gw \label{system}
\end{equation}  
in which $x \in \bbR^n$ is the state, $u \in \bbR^m$ is the control, $w \in \bbW \subset \bbR^q$ is a disturbance, and $x^+\in \bbR^n$ is the successor state. Denote the horizon length, i.e., number of time steps in the horizon, as $N \in \mathbb{I}_{\ge 1}$. Define the control and disturbance sequences: $\useq \eqbyd (u_0,u_1,\dots,u_{N-1})$, $\wseq \eqbyd (w_0,w_1,\dots,w_{N-1})$. Consider the following signal bound disturbance constraint set (signal two-norm bound)
\[
	\bbW
	\eqbyd
	\Bigl\{
	\wseq
	 \mid
	\sum_{k=0}^{N-1}|w_k|^2 \leq \alpha
	\Bigr\}
    \] 
Define the following objective function
\begin{equation}
    V(x_0, \useq, \wseq) = \sum_{k=0}^{N-1} \ell(x_k, u_k)  + \ell_f(x_N) \label{maincost}
\end{equation}
where $x_0$ is the initial state, $\ell(\cdot)$ the stage cost, $\ell_f(\cdot)$ the terminal cost
\[
    \ell(x,u) = (1/2)x'Qx + (1/2)u'Ru \qquad \ell_f(x) = (1/2)x'P_fx
\]
in which $Q \succeq 0$, $R \succ 0$, and $P_f \succeq 0$. We make the following assumptions.
\begin{assumption}
For the linear system \eqref{system}, $(A,B)$ stabilizable and  $(A,Q)$ detectable. \label{asst1}
\end{assumption}
\begin{assumption}
$\mathcal{R}(G)\subseteq\mathcal{R}(B)$. \label{asst2}
\end{assumption}
Assumption~\ref{asst2} ensures every column of $G$ lies in the column space of $B$, which is a sufficient condition for nonsingularity of the block matrix appearing in the Riccati recursion of \cref{sec:signal} (\cref{prop:range-inclusion-invertible}). Relaxing it requires pseudoinverses in place of inverses and is left to future work.
\begin{assumption}
$G'P_fG\neq0$. \label{asst3}
\end{assumption}
\begin{assumption}
$Q\succ0$, $P_f \succ0$. \label{asst4}
\end{assumption}
The strict definiteness in Assumption~\ref{asst4} is invoked starting from \cref{prop:derivative}; it is not required for the dynamic programming solution in \cref{prop:1dsignal,prop:ndsignal}.

We define a disturbance attenuation optimization, referred as the \textit{signal bound disturbance attenuation regulator} (SiDAR)
\begin{equation}
    V^*(x_0) \eqbyd \min_{u_0}\max_{w_0} \; \min_{u_1}\max_{w_1} \; \cdots \min_{u_{N-1}}\max_{w_{N-1}} \;
 \frac{V(x_0, \useq, \wseq)}{\sum^{N-1}_{k=0} |w_k|^2 } \quad \wseq \in \bbW\label{signaldp-w}
\end{equation}
subject to \eqref{system}. Under \cref{asst3}, the maximum over the disturbance is attained on the boundary $\sum_{k=0}^{N-1}|w_k|^2 = \alpha > 0$, as shown in the proof of \cref{prop:1dsignal}; the denominator of the ratio in \eqref{signaldp-w} is therefore strictly positive at the optimum and the ratio is well-defined.

\subsection{Dynamic Programming}

Although the constraint $\sum_{k=0}^{N-1}|w_k|^2 \leq \alpha$ couples disturbance choices across all $k$ stages, the SiDAR admits a standard Bellman recursion by augmenting the state with the remaining disturbance budget $b \in [0,\alpha]$ at each stage. Define the augmented value function $V_k: \bbR^n \times [0,\alpha] \to \bbR$ satisfying
\begin{equation}
\begin{split}
V_k(x,b) = \min_u \max_{|w|^2 \leq b} \Big[ & \frac{1}{\alpha}\ell(x,u) \\
& + V_{k+1}(Ax+Bu+Gw, b - |w|^2) \Big]
\end{split}
\label{bellman-recursion-signal}
\end{equation}
for $k \in \{0,\ldots,N-1\}$, with the boundary condition at $k=N$ given by
\begin{equation*}
V_N(x,b) \eqbyd V_N(x)= \frac{1}{\alpha} \ell_f(x) \qquad \text{for all $b$} %\label{bellman-terminal-signal}
\end{equation*}
The budget evolves as $$b_{k+1} = b_k - |w_k|^2$$ with initial condition $b_0 = \alpha$. The optimal control policy at stage $k$ is
\begin{equation}
\begin{split}
u_k^*(x,b) = \arg\min_u \max_{|w|^2 \leq b} \Big[ & \frac{1}{\alpha}\ell(x,u) \\
& + V_{k+1}(Ax+Bu+Gw, b - |w|^2) \Big]
\end{split}
\label{opt-control-policy-signal}
\end{equation}
Substituting $u_k^*(x,b)$ into \eqref{bellman-recursion-signal} yields
\begin{equation*}
\begin{split}
V_k(x,b) = \max_{|w|^2 \leq b} \Big[ & \frac{1}{\alpha}\ell(x,u_k^*(x,b)) \\
& + V_{k+1}(Ax+Bu_k^*(x,b)+Gw, b - |w|^2) \Big]
\end{split}
\end{equation*}
and the optimal disturbance policy is
\begin{equation*}
\begin{split}
w_k^*(x,b) = \arg\max_{|w|^2 \leq b} \Big[ & \frac{1}{\alpha}\ell(x,u_k^*(x,b)) \\
& + V_{k+1}(Ax+Bu_k^*(x,b)+Gw, b - |w|^2) \Big]
\end{split}
\end{equation*}
The optimal cost to \eqref{signaldp-w} is $V^*(x_0) = V_0(x_0, \alpha)$.

The inner maximization in \eqref{bellman-recursion-signal} is a constrained quadratic optimization over the compact set $\bbW_k(b) = \{w : |w|^2 \leq b\}$. The augmented state problem \eqref{bellman-recursion-signal} is intractable: the value function $V_k(x,b)$ depends on both the continuous state $x$ and continuous budget $b \in [0,\alpha]$, requiring representation of $V_k$ over a two dimensional continuum. Discretizing $b$ alone does not resolve this difficulty, as $V_k(x,b_i)$ must still be represented as a function of continuous $x$ for each budget level (or $x$ must also be discretized, yielding an $(n+1)$ dimensional grid). The backward recursion for $V_k$ and forward evolution of $b_k$ via $b_{k+1} = b_k - |w_k^*|^2$ present no fundamental obstacle (this is standard in dynamic programming) but the continuous representation does. Introducing a Lagrange multiplier $\lambda \geq 0$ for the aggregate budget constraint eliminates $b$ from the state: for any fixed $\lambda$, the problem admits tractable Riccati recursions in the original state $x$ alone, followed by a scalar convex optimization over $\lambda$ at each measured state, as derived in \cref{sec:signal}.

%\begin{remark}
%The equality $\sum_{k=0}^{N-1}|w_k|^2 = \alpha$ is an aggregate constraint over the entire horizon. In the augmented state formulation with remaining budget $b_k$, this does not imply stagewise equality $|w_k|^2 = b_k$ at each stage $k$. %The optimal disturbance allocation across stages depends on the state trajectory.
%\end{remark}

\begin{remark}
    Note that problem \eqref{signaldp-w} with $N \rightarrow \infty$ reduces to the standard $H_{\infty}$ robust control problem. Define
    \begin{equation*}
z = \frac{1}{\sqrt{2}}\begin{bmatrix}
Q^{1/2} & 0 \\
0 & R^{1/2}
\end{bmatrix}
\begin{bmatrix} x\\ u \end{bmatrix}
\end{equation*}
Then the objective function is
\[
\frac{V(x_0, \useq, \wseq)}{\sum^{\infty}_{k=0} |w_k|^2 } =  \frac{\smax{\zseq}^2}{\smax{\wseq}^2}
\]
and the SiDAR is equivalently expressed by
\[
\min_{\useq} \max_{\smax{\wseq}^2 = \alpha} \; \frac{\smax{\zseq}^2}{\smax{\wseq}^2}
\]
which is the standard time domain $H_{\infty}$ state feedback problem.
\end{remark}

\section{SiDAR Solution}
\label{sec:signal}
\subsection{Two-stage Solution}
We solve the two-stage version of the SiDAR \eqref{signaldp-w} for the linear system \eqref{system}. The signal bound constraint $\sum_{k=0}^{N-1}|w_k|^2 \leq \alpha$ couples the disturbances across stages; introducing a Lagrange multiplier $\lambda \geq 0$ for this constraint reduces the problem to a scalar convex optimization in $\lambda$ alone, with the inner stagewise minmax problems solved in closed form for each fixed $\lambda$. The two-stage case demonstrates how $\lambda$ is deferred from the inner stages to the outermost optimization at stage~$0$.

A two-stage SiDAR is
\begin{equation}
    V^*(x_0) \eqbyd \min_{u_0}\max_{w_0} \; \min_{u_1}\max_{w_1} \;
 \frac{V(x_0, \useq, \wseq)}{\sum^{1}_{k=0} |w_k|^2} \quad \wseq \in \bbW \label{1dsignaldp}
\end{equation}
where $\useq \eqbyd (u_0,u_1)$ and $\wseq \eqbyd (w_0,w_1)$ and the objective function is
\[
V(x_0, \useq, \wseq) = (1/2)\bigg( x_0'Qx_0  + u_0'Ru_0 + x_1'Qx_1 + u_1'Ru_1 + x_2P_fx_2\bigg)
\]
%and
%\[
%\sum^{N-1}_{k=0} |w_k|^2 = \alpha
%\]

\Cref{prop:1dsignal} states the resulting solution: the optimal multiplier $\lambda^*(x_0)$ minimizes a scalar convex value function over a feasibility domain $[\lambda_1, \infty)$ that the proposition also constructs, and the optimal control and disturbance policies follow from the stationary conditions of the stacked Lagrangian evaluated at $\lambda^*$.

\begin{proposition}[Two-stage SiDAR \eqref{1dsignaldp}]
    \label{prop:1dsignal}
    Let Assumptions 1-3 hold. Consider the following scalar convex optimization 
\begin{gather}
\min_{\lambda \in [\lambda_1, \infty)} \; \frac12\!\left(\frac{x_0}{\sqrt{\alpha}}\right)'\! \Pi_0(\lambda)\!\left(\frac{x_0}{\sqrt{\alpha}}\right) +\frac{\lambda}{2} \label{eq:scalar-opt}\\ 
\lambda_2\eqbyd \norm{G'P_fG} \nonumber\\
\lambda_1 \eqbyd \begin{cases} \displaystyle \min_{\lambda\ge \lambda_2} \Bigl\{\, \lambda \;:\; \lambda = \norm{ G'\Pi_1(\lambda)G} \Bigr\} & \text{if }\;\norm{G'\Pi_1(\lambda_2)G}>\lambda_2\\[10pt] \lambda_2 & \text{if }\;\norm{G'\Pi_1(\lambda_2)G}\le\lambda_2 \end{cases} \nonumber\\ 
\Pi_0(\lambda) = Q+A'\Pi_1A -A' \Pi_1 \begin{bmatrix} B & G \end{bmatrix} M_0(\lambda)^{-1} \begin{bmatrix} B' \\ G'\end{bmatrix} \Pi_1A \nonumber\\ 
\Pi_1(\lambda) = Q+A'P_fA -A' P_f \begin{bmatrix} B & G \end{bmatrix} M_1(\lambda)^{-1} \begin{bmatrix} B' \\ G'\end{bmatrix} P_fA \nonumber
\end{gather}
where 
\begin{align*}
M_0(\lambda) &\eqbyd \begin{bmatrix}
        B'\Pi_1B+R & B'\Pi_1G \\
        G'\Pi_1B & G'\Pi_1G-\lambda I
    \end{bmatrix} \\
M_1(\lambda) &\eqbyd \begin{bmatrix}
        B'P_{f}B+R & B'P_{f}G \\
        G'P_{f}B & G'P_{f}G-\lambda I
    \end{bmatrix}
\end{align*}
Given the solution to the scalar convex optimization \eqref{eq:scalar-opt}, $\lambda^*(x_0)$, and terminal condition $P_f \succeq0$, then 
    \begin{enumerate}
        \item The optimal control policies $u^*_0(x_0;\lambda^*)$ and $u^*_1(x_1;\lambda^*)$ to \eqref{1dsignaldp} satisfies the stationary conditions
    \begin{align}
    M_0(\lambda^*)
    \begin{bmatrix} u_0 \\ z_0 \end{bmatrix}^* =- \begin{bmatrix} B' \\ G'\end{bmatrix} \Pi_1A \; x_0 \label{2dsc-a} \\
    M_1(\lambda^*)
    \begin{bmatrix} u_1 \\ z_1 \end{bmatrix}^* = -\begin{bmatrix} B' \\ G'\end{bmatrix} P_fA \; x_1 \label{2dsc-b}
    \end{align} 
    \item The optimal disturbance policies $w^*_0(x_0;\lambda^*) = \overline{w}_0 \cap \bbW$ and $w^*_1(x_1;\lambda^*) = \overline{w}_1 \cap \bbW$ to \eqref{1dsignaldp} satisfies
        \begin{align}
            (B'\Pi_1G)'u^*_0(x_0;\lambda^*)+(G'\Pi_1G - \lambda^* I) \ \overline{w}_0 = - G'\Pi_1Ax_0
            \label{2dwcond-a} \\
             (B'P_fG)'u^*_1(x_0;\lambda^*)+( G'P_fG - \lambda^* I ) \ \overline{w}_1 = - G'P_fAx_1
             \label{2dwcond-b} 
        \end{align}
    \item The optimal cost to \eqref{1dsignaldp}  is
    \begin{equation}
    V^*(x_0) = (1/2)\;(\frac{x_0}{\sqrt{\alpha}})' \Pi_0(\lambda^*) (\frac{x_0}{\sqrt{\alpha}})  + \lambda^* /2 \label{2doc}
    \end{equation} 
    \item For $\lambda\geq\lambda_1$, we have that $\Pi_{0}(\lambda) \succeq \Pi_{1}(\lambda)\succeq P_f$, and $\Pi_0(\lambda)$ and $\Pi_1(\lambda)$ are monotonic nonincreasing in $\lambda$.
   \end{enumerate}
\end{proposition}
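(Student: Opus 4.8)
The plan is to solve the two-stage game by backward dynamic programming on the Lagrangian obtained by dualizing the single aggregate budget constraint $\sum_{k=0}^{1}|w_k|^2\le\alpha$ with one multiplier $\lambda\ge0$. Because the per-stage objective is convex in $w$, the worst-case disturbance saturates the budget, so the constraint may be taken as an equality; for a single quadratic (sphere) constraint, Lagrangian duality is tight by the S-procedure / trust-region hidden convexity, which is what lets a single $\lambda$ replace the budget state $b$. Scaling the multiplier as $\lambda/(2\alpha)$ produces exactly the $-\lambda I$ term in the $(w,w)$ block of $M_k(\lambda)$ and the additive constant $\lambda/2$ in \eqref{2doc}; this is the sense in which $\lambda$ is deferred to the initial stage.

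For parts 1 and 2 I would run the recursion one stage at a time. At stage $1$ the bracketed $\lambda$-penalized objective is a quadratic in the stacked variable $(u_1,w_1)$ whose Hessian is precisely $M_1(\lambda)$ and whose gradient has constant part $[B';G']P_fA\,x_1$; at stage $0$ the identical computation with $P_f$ replaced by the cost-to-go matrix $\Pi_1(\lambda)$ yields $M_0(\lambda)$. A genuine min-max saddle requires positive definiteness of the $(u,u)$ block (immediate from $R\succ0$) together with negative definiteness of its Schur complement $G'\Pi_{k+1}G-\lambda I-G'\Pi_{k+1}B(R+B'\Pi_{k+1}B)^{-1}B'\Pi_{k+1}G\prec0$, which is exactly what $\lambda_2$ and $\lambda_1$ enforce ($\lambda_1$ accounting for $\Pi_1$ in place of $P_f$ through the fixed point $\lambda=\|G'\Pi_1(\lambda)G\|$). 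Setting the gradient to zero gives \eqref{2dsc-a} and \eqref{2dsc-b}, and the disturbance conditions \eqref{2dwcond-a} and \eqref{2dwcond-b} are just the second (the $w$-) block row of those systems. Substituting $y_k^*=-M_k^{-1}[B';G']\Pi_{k+1}A\,x_k$ and using $M_k y_k^*=-[B';G']\Pi_{k+1}A\,x_k$ collapses the value to the Schur-complement forms defining $\Pi_1(\lambda)$ and $\Pi_0(\lambda)$.

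For part 3, assembling the two stages gives the inner (fixed-$\lambda$) game value $\tfrac{1}{2\alpha}x_0'\Pi_0(\lambda)x_0+\lambda/2$, and $\lambda^*(x_0)$ is its minimizer over $[\lambda_1,\infty)$; strong duality for the single sphere constraint identifies this with the primal game value. Convexity of the scalar program is the clean part: for $\lambda\ge\lambda_1$ the penalized game is convex in the controls and concave in the disturbances, so it has a saddle and its value equals the max-min value. For each fixed $(w_0,w_1)$ the $\lambda$-terms do not involve $u$, so $\min_u$ of the penalized objective is affine in $\lambda$; the value is therefore a pointwise supremum over $w$ of affine functions of $\lambda$, hence convex, and adding $\lambda/2$ preserves convexity. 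Restricting to $\lambda\ge\lambda_1$ is legitimate because for $\lambda<\lambda_1$ the relevant Schur complement loses negative definiteness and the inner maximization is unbounded.

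For part 4 I would separate the two assertions. Monotonicity in $\lambda$ is immediate from the value representation: for fixed policies the objective is pointwise nonincreasing in $\lambda$ (the penalty $-\tfrac{\lambda}{2\alpha}|w|^2$ grows), and $\min_u\max_w$ preserves this, so $x'\Pi_0(\lambda)x$ and $x'\Pi_1(\lambda)x$ are nonincreasing and both matrices are Loewner-nonincreasing. For the ordering, write $\Pi_1=\mathcal{R}_\lambda(P_f)$ and $\Pi_0=\mathcal{R}_\lambda(\Pi_1)$ with $\mathcal{R}_\lambda$ the one-step $H_\infty$ Riccati operator; order-preservation of $\mathcal{R}_\lambda$ again follows from the value representation (if $P\succeq P'$ then $(x^+)'Px^+\ge (x^+)'P'x^+$ pointwise, so the min-max value is larger), whence $\Pi_0\succeq\Pi_1$ reduces to $\Pi_1\succeq P_f$. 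I expect this last inequality, $\mathcal{R}_\lambda(P_f)\succeq P_f$ (that $P_f$ is a subsolution of the Riccati operator), to be the main obstacle: it is not a formal consequence of monotonicity, since an adversary whose effect on the state the control can cancel (via $\mathcal{R}(G)\subseteq\mathcal{R}(B)$) cannot by itself push the one-step value above $P_f$, so the uncontrollable growth term $A'P_fA$ together with the full strength of Assumptions 3 and 4 must carry the argument. I would attempt a Schur / completion-of-squares identity exhibiting $\Pi_1-P_f\succeq0$ directly, scrutinize this step most carefully, and otherwise invoke the Riccati comparison propositions collected in the appendix.
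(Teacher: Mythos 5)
Your route for items 1--3 is essentially the paper's: saturate the budget using \cref{asst3}, dualize the single sphere constraint with one multiplier (the paper invokes Propositions 7, 12.a, 14.a, and 15 of Rawlings et al.\ for precisely the S-procedure/saddle-point facts you cite), then run backward dynamic programming, with $\lambda_2$ and the fixed point $\lambda = \norm{G'\Pi_1(\lambda)G}$ supplying stagewise feasibility; the stationarity systems \eqref{2dsc-a}--\eqref{2dwcond-b} and the Schur-complement forms of $\Pi_1,\Pi_0$ come out the same way. One genuine difference: your convexity argument for the scalar program (for $\lambda\ge\lambda_1$ a saddle exists, so the value equals $\max_{\mathbf{w}}\min_{\mathbf{u}}$ of the Lagrangian, a pointwise supremum of functions affine in $\lambda$) is cleaner and more elementary than the paper's route via joint convexity of $(\lambda,x)\mapsto\phi_k(\lambda,x)$ and partial minimization, though the paper's stronger joint-convexity statement is reused in later results. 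Two points you gloss over that the paper handles explicitly and you would need to fill in: well-definedness of $\lambda_1$ (the paper proves $\lambda\mapsto\norm{G'\Pi_1(\lambda)G}$ is continuous and nonincreasing on $[\lambda_2,\infty)$ and applies the intermediate value theorem), and the justification for pulling $\min_\lambda$ outside the \emph{sequential} min--max, whose information structure differs from the stacked problem where the S-procedure applies (the paper argues via strong duality of the stacked Lagrangian that all orderings of the individual $\min_{u_k}$, $\max_{w_k}$ coincide).

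On item 4 you and the paper agree up to the same reduction: monotonicity in $\lambda$ by pointwise comparison of objectives, and order preservation of the one-step Riccati operator, which reduces $\Pi_0\succeq\Pi_1$ to $\Pi_1\succeq P_f$. You then stop and flag $\Pi_1\succeq P_f$ as the main obstacle. Formally that is a gap in your proposal, but your suspicion is exactly right, and it points at a defect in the paper itself. The paper's justification is a non sequitur: from $\Pi_1 = \bar{Q}_1 + \bar{A}_1'P_f\bar{A}_1 - \bar{A}_1'P_fG(G'P_fG-\lambda I)^{\dagger}G'P_f\bar{A}_1$ and nonpositivity of the subtracted term one obtains only $\Pi_1 \succeq \bar{Q}_1 + \bar{A}_1'P_f\bar{A}_1 \succeq 0$, never a comparison with $P_f$. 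In fact the inequality is false under Assumptions 1--3 (even adding \cref{asst4}): take scalar data $A=0$, $B=G=R=1$, $Q=1/4$, $P_f=1$. All assumptions hold, $\lambda_2=1$, every $A$-dependent term in the recursion vanishes, and so $\Pi_1(\lambda)=Q=1/4 \prec 1 = P_f$ for all admissible $\lambda$; consequently $\Pi_1 \succeq P_f$ and the nonincreasing-in-$k$ claim both fail, while the remaining items of the proposition are unaffected. So your instinct that $P_f$ being a subsolution of the one-step Riccati operator ``is not a formal consequence of monotonicity'' is correct, and no completion-of-squares identity will rescue it under the stated hypotheses: the result needs an additional assumption on the terminal cost (e.g., that $P_f$ is a fixed point or subsolution of the recursion, as for a steady-state terminal weight). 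The parts of your plan that you did commit to --- $\lambda$-monotonicity and the implication $\Pi_1\succeq P_f \Rightarrow \Pi_0\succeq\Pi_1$ --- are sound and match the paper.
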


\noindent\textbf{Proof sketch.}\ The proof has four blocks: boundary attainment of the disturbance optimum under \cref{asst3}, deferment of the Lagrange multiplier $\lambda$ to the outer optimization via the stacked Lagrangian and strong duality, construction of the feasibility bound $\lambda_1$ from the admissibility condition $\lambda \geq \norm{G'\Pi_1(\lambda)G}$, and joint convexity of the value function $L(\lambda)$ in $(\lambda, x_0)$. The full proof is in the appendix (\cref{app:proofs}).

\subsection{Finite Horizon Solution}
We now generalize to derive the recursive optimal solution to the finite horizon SiDAR \eqref{signaldp-w}
\begin{equation*}
    V^*(x_0) \eqbyd \min_{u_0}\max_{w_0} \; \min_{u_1}\max_{w_1} \; \cdots \min_{u_{N-1}}\max_{w_{N-1}} \;
 \frac{V(x_0, \useq, \wseq)}{\sum^{N-1}_{k=0} |w_k|^2 } \quad \wseq \in \bbW
\end{equation*}
where $\useq \eqbyd (u_0,u_1,\dots,u_{N-1})$, $\wseq \eqbyd (w_0,w_1,\dots,w_{N-1})$, and the objective function is \eqref{maincost}
\begin{equation*}
    V(x_0, \useq, \wseq) = \sum_{k=0}^{N-1} \ell(x_k, u_k)  + \ell_f(x_N)
\end{equation*}

\Cref{prop:ndsignal} extends \cref{prop:1dsignal} to horizon $N$: the optimal multiplier $\lambda^*(x_0)$ solves a scalar convex optimization on $[\lambda_1, \infty)$, and the optimal control, disturbance, and value at each stage $k$ are obtained from a backward Riccati recursion in $\Pi_k(\lambda)$.

\begin{proposition}[Finite horizon SiDAR \eqref{signaldp-w}]
    \label{prop:ndsignal}
    Let Assumptions 1-3 hold. Consider the following scalar convex optimization 
\begin{gather}
\mathbf{L}_{si}: \quad \min_{\lambda \in [\lambda_1, \infty)}
     \; \frac12\!\left(\frac{x_0}{\sqrt{\alpha}}\right)'\!
           \Pi_0(\lambda)\!\left(\frac{x_0}{\sqrt{\alpha}}\right)
     +\frac{\lambda}{2}\ \label{lsi} 
\end{gather}
\begin{gather*}
\lambda_N \eqbyd \norm{G'P_fG} \\
\begin{split}
\lambda_k \eqbyd
\begin{cases}
 \begin{aligned}
  &\min_{\lambda\ge \lambda_{k+1}} \Bigl\{\, \lambda : \lambda =  \norm{ G'\Pi_{k+1}(\lambda)G} \Bigr\}
 \end{aligned} \\
 \qquad \qquad \text{if }\;\norm{G'\Pi_{k+1}(\lambda_{k+1})G}>\lambda_{k+1}\\[10pt]
 \lambda_{k+1} \\
 \qquad \qquad \text{if }\;\norm{G'\Pi_{k+1}(\lambda_{k+1})G}\le\lambda_{k+1}
\end{cases}
\end{split}
\end{gather*}
    subject to the Riccati recursion
    \begin{equation}
        \Pi_k(\lambda) = Q+A'\Pi_{k+1}A-A' \Pi_{k+1} \begin{bmatrix} B & G \end{bmatrix}
     M_k(\lambda)^{-1}
    \begin{bmatrix} B' \\ G'\end{bmatrix}\Pi_{k+1}A \label{signalrec1}
    \end{equation}
    where
    \[
    M_k(\lambda) \eqbyd \begin{bmatrix} B'\Pi_{k+1}B + R & B'\Pi_{k+1} G \\ (B'\Pi_{k+1}G)' & G'\Pi_{k+1}G - \lambda I  \end{bmatrix}
    \]
    for $k \in [0,1,\dots,N-1]$ and terminal condition $\Pi_N = P_f \succeq0$. Given the solution to the \textit{scalar} convex optimization \eqref{lsi}, $\lambda^*(x_0)$, then
    \begin{enumerate}
        \item The optimal control policy $u^*_k(x_k;\lambda^*)$ to \eqref{signaldp-w} satisfies the stationary conditions
    \begin{equation}
     M_k(\lambda^*)
    \begin{bmatrix} u_k \\ z_k \end{bmatrix}^* = -\begin{bmatrix} B' \\ G'\end{bmatrix} \Pi_{k+1}A \; x_k \label{ndsc2}
    \end{equation} 
    \item The optimal disturbance policy $w^*_k(x_k;\lambda^*) = \overline{w}_k \cap \bbW$ to \eqref{signaldp-w} satisfies 
    \begin{equation}
    \begin{split}
    (B'\Pi_{k+1}G)'u^*_k(x_k;\lambda^*) &+(G'\Pi_{k+1}G - \lambda^* I ) \ \overline{w}_k \\
    &= -G'\Pi_{k+1}Ax_k
    \end{split}
    \label{ndwcond2}
    \end{equation}
    \item The optimal cost to \eqref{signaldp-w} is
    \begin{equation}
    V^*(x_0) = (1/2)\;(\frac{x_0}{\sqrt{\alpha}})' \Pi_0(\lambda^*) (\frac{x_0}{\sqrt{\alpha}})  + \lambda^* /2 \label{ndoc2}
    \end{equation} 
    \item For $\lambda\geq\lambda_1$, $\Pi_{k}(\lambda)$ is monotonic nonincreasing in $k$ and in $\lambda$.
    \end{enumerate}
\end{proposition}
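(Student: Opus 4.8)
The plan is to reproduce the architecture of the two-stage proof of \cref{prop:1dsignal}, replacing its two explicit backward passes with a single backward induction on the stage index $k$ running from $N$ down to $0$. First I would write the $N$-stage system in stacked form $\mathbf{x} = \mathcal{A}x_0 + \mathcal{B}\mathbf{u} + \mathcal{G}\mathbf{w}$ with block-Toeplitz $\mathcal{A}, \mathcal{B}, \mathcal{G}$ and block-diagonal weights $\mathcal{Q} = \mathrm{diag}(Q,\dots,Q,P_f)$, $\mathcal{R} = \mathrm{diag}(R,\dots,R)$. The argument that the constraint $|\mathbf{w}|^2 \le \alpha$ may be replaced by the equality $|\mathbf{w}|^2 = \alpha$ carries over verbatim: \cref{asst3} forces $\mathcal{G}'\mathcal{Q}\mathcal{G} \neq 0$, so no unconstrained interior maximizer exists and the maximum is attained on the boundary. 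Introducing the Lagrangian $L = V - (\lambda/2)(\mathbf{w}'\mathbf{w}-\alpha)$ and invoking the same duality and saddle-point propositions from Rawlings et al.\citep{rawlings:mannini:kuntz:2024} shows that for $\lambda \ge \|\mathcal{G}'\mathcal{Q}\mathcal{G}\|$ strong duality holds and all orderings of the $\min_{u_k}$/$\max_{w_k}$ operations coincide; this existence of an admissible $\lambda$ justifies moving $\min_\lambda$ to the outermost position, exactly as in the two-stage case.

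The core is then the backward induction. I would carry the following joint invariant from $k=N$ downward: for every $\lambda \ge \lambda_{k+1}$, $M_k(\lambda)$ is invertible (via \cref{prop:range-inclusion-invertible}, \cref{asst2,asst3}, and $G'\Pi_{k+1}(\lambda)G - \lambda I \preceq 0$), $\Pi_k(\lambda)$ defined by the Riccati recursion \eqref{signalrec1} is continuous and satisfies $\Pi_k(\lambda)\succeq P_f \succeq 0$, the partial value $\phi_k(\lambda,x)$ is jointly convex in $(\lambda,x)$, and $\Pi_k(\lambda)$ is nonincreasing in $\lambda$. Each inductive step applies Proposition 14.a from Rawlings et al.\citep{rawlings:mannini:kuntz:2024} to the stage-$k$ saddle problem $\min_{u_k}\max_{w_k}$ to produce $\Pi_k$, and reuses the convexity composition, partial-minimization, and comparison arguments verbatim from the two-stage proof. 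The feasibility threshold is then built from the function $m_k(\lambda) = \|G'\Pi_{k+1}(\lambda)G\|$: continuity and monotonicity of $\Pi_{k+1}$ make $m_k$ continuous and nonincreasing, so the same two-case intermediate-value-theorem construction yields a unique fixed point (or equality with the previous threshold) and the ordering $\lambda_k \ge \lambda_{k+1}$, guaranteeing $\lambda \ge m_k(\lambda)$ for all admissible $\lambda$. Telescoping gives the chain $\lambda_N \le \lambda_{N-1} \le \cdots \le \lambda_1$, so the single domain $[\lambda_1,\infty)$ simultaneously satisfies every stagewise admissibility condition.

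With the recursion in hand, the outer step is identical to the two-stage argument: $\Pi_0(\lambda)$ is continuous and positive semidefinite on $[\lambda_1,\infty)$, so $L(\lambda) = \tfrac12 (x_0/\sqrt\alpha)'\Pi_0(\lambda)(x_0/\sqrt\alpha) + \lambda/2$ is convex (from joint convexity of $\phi_0$) and coercive (from $\Pi_0\succeq 0$), and Weierstrass delivers a minimizer $\lambda^*\ge\lambda_1$. Items 1--3 are then read off by evaluating Proposition 14.a at $\lambda^*$ stage by stage: the stationary conditions \eqref{ndsc2}, the disturbance conditions \eqref{ndwcond2} together with the active-budget intersection $w_k^* = \overline{w}_k \cap \bbW$, and the optimal cost \eqref{ndoc2}. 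For item 4, monotonicity in $\lambda$ is already part of the induction, and monotonicity in $k$ (i.e.\ $\Pi_k(\lambda)\succeq\Pi_{k+1}(\lambda)$, hence nonincreasing in $k$) follows by a second induction using the same comparison lemma: the base $\Pi_{N-1}\succeq P_f = \Pi_N$ and the step $\Pi_{k+1}\succeq\Pi_{k+2}\Rightarrow\Pi_k\succeq\Pi_{k+1}$ obtained by comparing the two stage-$k$ minmax objectives built from $\Pi_{k+1}$ versus $\Pi_{k+2}$.

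The main obstacle is the interlocking nature of the induction: the construction of each threshold $\lambda_k$ via the fixed point of $m_k$ requires that $\Pi_{k+1}$ already be continuous, nonincreasing in $\lambda$, and positive semidefinite, while those very properties of $\Pi_k$ are what must be established at the next step and depend in turn on $M_k$ being invertible on $[\lambda_k,\infty)$. These cannot be proved in separate passes; continuity, positive semidefiniteness, monotonicity in $\lambda$, joint convexity, and the threshold construction must all be carried as a single strengthened inductive hypothesis. One must also verify $\lambda_k\ge\lambda_{k+1}$ at every step so that $[\lambda_k,\infty)$ lies in the domain where $\Pi_{k+1}$ is defined, and check that the nested bound $\lambda_1$ is consistent with the stacked strong-duality bound $\|\mathcal{G}'\mathcal{Q}\mathcal{G}\|$, so that deferring $\min_\lambda$ to the domain $[\lambda_1,\infty)$ remains valid. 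Once this strengthened invariant is set up correctly, every individual step is a routine transcription of the corresponding two-stage computation.
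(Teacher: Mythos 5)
Your proposal is correct and follows essentially the same route as the paper's proof: the stacked reformulation with boundary attainment under \cref{asst3}, the Lagrangian with $\min_\lambda$ deferred outermost via the strong-duality propositions of Rawlings et al., backward dynamic programming applying Proposition 14.a stage by stage with the fixed-point/intermediate-value construction of the thresholds $\lambda_k$ and their ordering $\lambda_k \ge \lambda_{k+1}$, the Weierstrass argument for the outer scalar convex program, and monotonicity in $k$ and $\lambda$ by the comparison argument inherited from \cref{prop:1dsignal}. The paper states this induction more tersely ("follows by induction from \cref{prop:1dsignal}"), whereas you make the strengthened inductive invariant explicit, but the substance is identical.
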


\noindent\textbf{Proof sketch.}\ The proof extends \cref{prop:1dsignal} to horizon $N$ by induction. Each stage applies the same four-block argument (boundary attainment, stacked Lagrangian and deferment of $\lambda$, feasibility bound construction, joint convexity), with the bound $\lambda_k$ replacing $\lambda_1$ at stage $k$. The full proof is in the appendix (\cref{app:proofs}).

\Cref{prop:derivative} establishes three results: the derivative $dL/d\lambda$ in closed form (item~1), its monotonicity on $[\lambda_1, \infty)$ (item~2), and the condition $\lambda^* = \lambda_1$ if and only if $|\mathbf{z}^*(\lambda_1)|^2 \leq \alpha$ (item~3). Item~3 is the basis for the geometry of the solution regions in \cref{sec:control-region} (it expresses $\mathcal{X}_L(\alpha)$ and $\mathcal{X}_{NL}(\alpha)$ as ellipsoids in $\bbR^n$); items~1 and~3 are also used in the online \cref{alg:online-signal} to detect whether $\lambda^* = \lambda_1$ at a given measured state.

\begin{proposition}[Derivative of the SiDAR value function]
\label{prop:derivative}
Let Assumptions 1-4 hold. Consider the finite horizon SiDAR \eqref{signaldp-w} with value function
\begin{equation}
L(\lambda) \eqbyd \frac{1}{2}\left(\frac{x_0}{\sqrt{\alpha}}\right)' \Pi_0(\lambda) \left(\frac{x_0}{\sqrt{\alpha}}\right) + \frac{\lambda}{2}
\label{eq:L-lambda}
\end{equation}
defined for $\lambda \geq \lambda_1$, where $\lambda_1$ is the feasibility bound from \cref{prop:ndsignal}. Define the Lagrangian stationary point $\mathbf{z}^*(\lambda) \eqbyd \tilde{J}(\lambda) x_0$ where
\begin{equation}
\tilde{J}(\lambda) \eqbyd \begin{bmatrix}
J_0(\lambda) \\
J_1(\lambda) \Phi_{1,0}(\lambda) \\
\vdots \\
J_{N-1}(\lambda) \Phi_{N-1,0}(\lambda)
\end{bmatrix}
\label{eq:z-recursive}
\end{equation}
with
\begin{gather*}
M_k(\lambda) \eqbyd \begin{bmatrix} B'\Pi_{k+1}(\lambda) B+R & B'\Pi_{k+1}(\lambda) G \\ G'\Pi_{k+1}(\lambda) B & G'\Pi_{k+1}(\lambda) G-\lambda I \end{bmatrix} \\
d_k(\lambda) \eqbyd \begin{bmatrix} B'\Pi_{k+1}(\lambda) A \\ G'\Pi_{k+1}(\lambda) A \end{bmatrix}\\
K_k(\lambda) \eqbyd -\begin{bmatrix} I & 0 \end{bmatrix} M_k(\lambda)^{-1} d_k(\lambda) \\
J_k(\lambda) \eqbyd -\begin{bmatrix} 0 & I \end{bmatrix} M_k(\lambda)^{-1} d_k(\lambda)\\
F_k(\lambda) \eqbyd A + BK_k(\lambda) + GJ_k(\lambda) \\
\Phi_{k,j}(\lambda) \eqbyd F_{k-1}(\lambda)F_{k-2}(\lambda)\cdots F_j(\lambda)
\end{gather*}
for $j < k$ and $\Phi_{k,k}(\lambda) \eqbyd I$, and $\Pi_N(\lambda) = P_f$. Then
\begin{enumerate}
\item The derivative of $L(\lambda)$ for $\lambda \geq \lambda_1$ is
\begin{equation}
\frac{dL}{d\lambda} = \frac{1}{2} - \frac{1}{2}\frac{|\mathbf{z}^*(\lambda)|^2}{\alpha}
\label{eq:dL-dlambda}
\end{equation}
\item The derivative $dL/d\lambda$ is nondecreasing on $[\lambda_1, \infty)$.
\item The optimal multiplier satisfies $\lambda^* = \lambda_1$ if and only if $|\mathbf{z}^*(\lambda_1)|^2 \leq \alpha$.
\end{enumerate}
\end{proposition}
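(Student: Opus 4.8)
The plan is to recognize the value function \eqref{eq:L-lambda} as a scaled inner saddle value of the Lagrangian in \eqref{signaldpL} and to differentiate it by the envelope theorem (chain rule plus stationarity) for item~1; items~2 and~3 then follow from the convexity of $L(\lambda)$ already established in \cref{prop:ndsignal}. Write $\Psi(\lambda)\eqbyd\min_{\useq}\max_{\wseq}L(x_0,\useq,\wseq,\lambda)$ for the inner minmax value in \eqref{signaldpL}. Performing the Riccati reduction of \cref{prop:ndsignal} on the $\lambda$-dependent Lagrangian gives $\Psi(\lambda)=\tfrac12 x_0'\Pi_0(\lambda)x_0+\tfrac12\lambda\alpha$, so $L(\lambda)=\Psi(\lambda)/\alpha$, which is \eqref{eq:L-lambda}. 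On $[\lambda_1,\infty)$ every $M_k(\lambda)$ is invertible by \cref{prop:ndsignal}, so the inner problem has a unique saddle point whose disturbance component is $\wseq^*(\lambda)=\tilde J(\lambda)x_0=\mathbf{z}^*(\lambda)$ by \eqref{eq:z-recursive}, with control component determined analogously through the gains $K_k(\lambda)$; since each $M_k(\lambda)^{-1}$ is a rational matrix function, $\lambda\mapsto(\useq^*(\lambda),\wseq^*(\lambda))$ is $C^\infty$ on $[\lambda_1,\infty)$ (one-sided at $\lambda_1$).

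For item~1 I would differentiate $\Psi(\lambda)=L(x_0,\useq^*(\lambda),\wseq^*(\lambda),\lambda)$ by the chain rule. Because $(\useq^*(\lambda),\wseq^*(\lambda))$ is the interior stationary point of the jointly quadratic stacked Lagrangian, $\nabla_{\useq}L=0$ and $\nabla_{\wseq}L=0$ there (equivalently, the stagewise conditions \eqref{ndsc2} and \eqref{ndwcond2} hold), so the two chain-rule terms carrying $d\useq^*/d\lambda$ and $d\wseq^*/d\lambda$ vanish. Only the explicit partial remains, $\partial_\lambda L=-\tfrac12(\wseq'\wseq-\alpha)$, which at $\wseq^*=\mathbf{z}^*(\lambda)$ equals $-\tfrac12(|\mathbf{z}^*(\lambda)|^2-\alpha)$. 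Hence $\Psi'(\lambda)=-\tfrac12(|\mathbf{z}^*(\lambda)|^2-\alpha)$, and dividing by $\alpha$ yields $dL/d\lambda=\tfrac12-\tfrac12|\mathbf{z}^*(\lambda)|^2/\alpha$, establishing \eqref{eq:dL-dlambda}.

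Items~2 and~3 then rest on convexity. For item~2, $L(\lambda)$ is convex on $[\lambda_1,\infty)$ by \cref{prop:ndsignal} and differentiable by item~1, and the derivative of a differentiable convex function is nondecreasing; equivalently $|\mathbf{z}^*(\lambda)|^2$ is nonincreasing in $\lambda$. For item~3 I would apply the first-order optimality condition for minimizing a convex differentiable function over the half-line $[\lambda_1,\infty)$; the minimizer $\lambda^*$ exists by the coercivity/Weierstrass argument of \cref{prop:ndsignal}. Since $dL/d\lambda$ is nondecreasing, $\lambda_1$ minimizes $L$ if and only if the right derivative $dL/d\lambda|_{\lambda=\lambda_1}\ge0$, which by \eqref{eq:dL-dlambda} is exactly $|\mathbf{z}^*(\lambda_1)|^2\le\alpha$. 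I would spell out both directions: if $|\mathbf{z}^*(\lambda_1)|^2\le\alpha$ then $dL/d\lambda\ge0$ on all of $[\lambda_1,\infty)$ and $L$ is nondecreasing, so $\lambda^*=\lambda_1$; conversely, $\lambda^*=\lambda_1$ forces $dL/d\lambda|_{\lambda=\lambda_1}\ge0$, i.e. $|\mathbf{z}^*(\lambda_1)|^2\le\alpha$.

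The main obstacle is the rigorous envelope step in item~1: I must confirm that the saddle optimizers are truly differentiable in $\lambda$, which follows from invertibility of every $M_k(\lambda)$ on $[\lambda_1,\infty)$, with \cref{asst4} supplying the strict definiteness that makes the saddle unique and $\lambda\mapsto(\useq^*,\wseq^*)$ smooth, and that the chain-rule cross terms vanish by stationarity. Two further points require care: taking the one-sided derivative at the left endpoint $\lambda=\lambda_1$, and noting that $\mathbf{z}^*(\lambda)$ is the \emph{unconstrained} maximizer of the Lagrangian over $\wseq$ (the budget having been absorbed into $\lambda$), not the clipped disturbance $\overline{w}_k\cap\bbW$ of \cref{prop:ndsignal}; these coincide at $\lambda^*$, where either $|\mathbf{z}^*(\lambda^*)|^2=\alpha$ or $\lambda^*=\lambda_1$ with $|\mathbf{z}^*(\lambda_1)|^2\le\alpha$.
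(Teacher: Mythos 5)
Your proposal is correct in substance but takes a genuinely different route from the paper. The paper never differentiates the recursive problem directly: it (i) proves by induction that $\Pi_k(\lambda)$, $M_k(\lambda)^{-1}$, and hence $L(\lambda)$ and $\tilde J(\lambda)$ are rational in $\lambda$ with no poles on $[\lambda_1,\infty)$; (ii) runs the envelope/chain-rule argument on the \emph{stacked} Lagrangian, whose stationary point is supplied by Proposition 14.a of Rawlings et al.\ only on $\mathcal{S}^\circ=(\norm{\mathcal{G}'\mathcal{Q}\mathcal{G}},\infty)$; and (iii) transfers the identity $dL/d\lambda=\tfrac12-\tfrac12\,|\mathbf{z}^*(\lambda)|^2/\alpha$ from $\mathcal{S}^\circ$ to all of $[\lambda_1,\infty)$ by the identity theorem for real-analytic (rational) functions. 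You instead apply the envelope theorem directly to the sequential (DP) stationary trajectory on the whole domain $[\lambda_1,\infty)$, which, if fully justified, removes the analytic-continuation machinery altogether; your items 2 and 3 coincide with the paper's treatment.

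The crux you assert but do not prove is the parenthetical ``equivalently'': that the stagewise conditions \eqref{ndsc2} and \eqref{ndwcond2} imply first-order stationarity of the stacked Lagrangian, $\nabla_{\useq}\mathcal{L}=0$ and $\nabla_{\wseq}\mathcal{L}=0$. This is precisely what the paper avoids claiming on $[\lambda_1,\norm{\mathcal{G}'\mathcal{Q}\mathcal{G}})$, where the stacked Lagrangian need not be concave in $\wseq$ and Proposition 14.a is unavailable; as you note, only first-order stationarity (not a global saddle) is needed for the cross terms to vanish, so your argument can survive there, but the equivalence must be established. It does hold, via a costate induction: stacked stationarity reads $Ru_k+B'p_{k+1}=0$ and $-\lambda z_k+G'p_{k+1}=0$ with $p_N=P_fx_N$ and $p_k=Qx_k+A'p_{k+1}$, while the gain definitions and the Riccati recursion \eqref{signalrec1} give $\Pi_k(\lambda)=Q+A'\Pi_{k+1}(\lambda)F_k(\lambda)$, whence $p_k=\Pi_k(\lambda)x_k$ along the closed loop by backward induction, making the stacked and stagewise conditions identical. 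You also implicitly use the telescoping identity that the DP value $\tfrac12 x_0'\Pi_0(\lambda)x_0+\lambda\alpha/2$ equals the Lagrangian evaluated at that stationary trajectory, which likewise needs a line of proof. With these two short lemmas supplied, your proof is complete, and it reaches the regime $\lambda<\norm{\mathcal{G}'\mathcal{Q}\mathcal{G}}$ in one stroke --- exactly the regime the paper needed rationality and the identity theorem to cover.
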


\noindent\textbf{Proof sketch.}\ The proof has three blocks. (i)~The recursive value function $L(\lambda)$ and the stacked Lagrangian value $\tilde{L}(\lambda)$ are rational in $\lambda$ and agree on the open region where the stacked Lagrangian argument applies; analytic continuation extends the equality to the interior $(\lambda_1, \infty)$ of the recursive domain, and the right limit at $\lambda = \lambda_1$ extends it to the boundary. (ii)~Differentiating $\tilde{L}(\lambda)$ at its saddle point, the partial derivatives in $\mathbf{u}$ and $\mathbf{w}$ are zero by stationarity, leaving $dL/d\lambda = \tfrac{1}{2} - \tfrac{1}{2\alpha}|\mathbf{z}^*(\lambda)|^2$. (iii)~Convexity of $L(\lambda)$ and the monotonicity of $dL/d\lambda$ give the optimality condition for $\lambda^*$. The full proof is in the appendix (\cref{app:proofs}).

\begin{remark}
We denote Lagrangian stationary disturbance variables by $z_k^*$, which are computed from the unconstrained stationary conditions \eqref{ndsc2}. The optimal disturbance $w_k^*$ satisfying the signal bound constraint is obtained via $w_k^* = \overline{w}_k \cap \bbW$ as in \eqref{ndwcond2}.
\end{remark}

\subsection{Optimal Policy and Implementation}
\label{sec:policy-structure-signal}

The optimal state feedback policy for the SiDAR \eqref{signaldp-w} is nonlinear in the state. To understand this nonlinearity, we first recall the Bellman recursion from \eqref{bellman-recursion-signal}--\eqref{opt-control-policy-signal}
\begin{equation}
\begin{split}
V_k(x,b) = \min_u \max_{|w|^2 \leq b} \Big[ & \frac{1}{\alpha}\ell(x,u) \\
& + V_{k+1}(Ax+Bu+Gw, b - |w|^2) \Big]
\end{split}
\tag{\ref{bellman-recursion-signal}}
\end{equation}
\begin{equation}
\begin{split}
u_k^*(x,b) = \arg\min_u \max_{|w|^2 \leq b} \Big[ & \frac{1}{\alpha}\ell(x,u) \\
& + V_{k+1}(Ax+Bu+Gw, b - |w|^2) \Big]
\end{split}
\tag{\ref{opt-control-policy-signal}}
\end{equation}
The dynamic programming solution in \cref{prop:ndsignal} evaluates these recursions by introducing a single Lagrange multiplier $\lambda \geq 0$ for the aggregate budget constraint, transforming the problem into a backward recursion for the matrices $\Pi_k(\lambda)$ via \eqref{signalrec1} and a forward optimization for the multiplier at each stage. This approach eliminates the budget $b$ from the backward recursion: instead of representing $V_k(x,b)$ over the continuum $(x,b) \in \bbR^n \times [0,\alpha]$, the backward pass computes $\Pi_k(\lambda)$ as functions of $x$ alone, and the online optimization reduces to a scalar convex program over $\lambda$ at each measured state.

At stage $k$ with current state $x_k$ and remaining budget $b_k = \alpha - \sum_{j=0}^{k-1}|w_j|^2$, the optimal multiplier for the remaining $N-k$ stages is determined by
\begin{equation}
\lambda^*(x_k, b_k) = \arg\min_{\lambda \in [\lambda_{k+1}, \infty)} \frac{1}{2}\left(\frac{x_k}{\sqrt{\alpha}}\right)' \Pi_k(\lambda) \left(\frac{x_k}{\sqrt{\alpha}}\right) + \frac{b_k}{2 \alpha} \lambda
\label{eq:signal-k-opt}
\end{equation}
where $\lambda_{k+1}$ is the feasibility lower bound defined by
\begin{equation}
\begin{split}
\lambda_N &\eqbyd \norm{G'P_fG} \\
\lambda_{k+1} &\eqbyd
\begin{cases}
 \begin{aligned}
  &\min_{\lambda\ge \lambda_{k+2}}
      \Bigl\{
         \lambda : \lambda = \norm{ G'\Pi_{k+2}(\lambda)G}
      \Bigr\}
 \end{aligned} \\
 \qquad \qquad \text{if }\norm{G'\Pi_{k+2}(\lambda_{k+2})G}>\lambda_{k+2}\\[10pt]
 \lambda_{k+2} \\
 \qquad \qquad \text{if }\norm{G'\Pi_{k+2}(\lambda_{k+2})G}\le\lambda_{k+2}
\end{cases}
\end{split}
\label{eq:lambda-k-plus-1}
\end{equation}
ensuring existence of solutions to the Riccati recursion \eqref{signalrec1}, and $\Pi_k(\lambda)$ is computed via the backward recursion \eqref{signalrec1}. Given $\lambda^*(x_k, b_k)$, the optimal control from \eqref{ndsc2} is
\begin{equation}
u_k^*(x_k, \lambda^*(x_k,b_k)) = K_k(\lambda^*(x_k, b_k)) x_k
\label{eq:nonlinear-policy-signal}
\end{equation}
where the gain matrix is defined by
\begin{equation}
K_k(\lambda) \eqbyd -\begin{bmatrix} I & 0 \end{bmatrix} M_k(\lambda)^{-1} \begin{bmatrix} B' \\ G' \end{bmatrix} \Pi_{k+1}(\lambda) A
\label{eq:gain-def-signal}
\end{equation}
The policy \eqref{eq:nonlinear-policy-signal} is nonlinear in $x_k$ because the optimal multiplier $\lambda^*(x_k, b_k)$ depends on the state through the quadratic term in \eqref{eq:signal-k-opt}, making the composition $x_k \mapsto \lambda^*(x_k, b_k) \mapsto K_k(\lambda^*(x_k, b_k))$ state dependent and nonlinear.

\begin{remark}[Comparison with LQR]
Unlike standard LQR where backward dynamic programming computes fixed gain matrices $K_k$ that are applied directly as $u^*_k(x_k) = K_k x_k$, the SiDAR requires both a backward sweep (compute $\Pi_k(\lambda)$ for $\lambda \in [\lambda_{k+1}, \infty)$ via \eqref{signalrec1}) and an online forward optimization (solve \eqref{eq:signal-k-opt} at each stage $k$ given the current state $x_k$ and remaining budget $b_k$) to determine the state dependent gains. This online optimization introduces the nonlinearity.
\end{remark}

\begin{remark}[Implementation and time consistency]
The optimal policy \eqref{eq:nonlinear-policy-signal} requires resolving the optimization \eqref{eq:signal-k-opt} at each stage $k$ from the current state $x_k$ and remaining budget $b_k$. If the realized state deviates from the nominal trajectory, whether due to disturbances, model mismatch, or any other reason, the multiplier $\lambda^*(x_0)$ computed at $k=0$ is no longer optimal for the current state, and the optimization must be repeated. This shrinking horizon distinguishes the SiDAR from problems where the optimal policy can be precomputed offline.
\end{remark}

\begin{remark}[Computational implementation]
At each stage $k$, given the measured state $x_k$ and remaining budget $b_k$, the optimal policy \eqref{eq:nonlinear-policy-signal} is implemented online: solve the scalar optimization \eqref{eq:signal-k-opt} to obtain $\lambda^*(x_k, b_k)$, compute the gain $K_k(\lambda^*(x_k, b_k))$ from \eqref{eq:gain-def-signal}, and apply $u_k = K_k(\lambda^*(x_k, b_k)) x_k$. The optimization \eqref{eq:signal-k-opt} is a scalar convex program on $[\lambda_{k+1},\infty)$ (\cref{prop:ndsignal}) with derivative given in closed form by \cref{prop:derivative}: the optimum is at the boundary $\lambda^* = \lambda_{k+1}$ when $dL/d\lambda(\lambda_{k+1}) \geq 0$ (\cref{prop:derivative} item~3), and otherwise an interior root of $dL/d\lambda = 0$ is found by bisection or Brent's method.\footnote{In principle, the policy $u_k(x_k, b_k)$ could be precomputed offline as a function over a discretization of the state $x \in \bbR^n$ and budget $b \in [0, \alpha]$ and applied at runtime via table lookup; this suffers the curse of dimensionality and is limited to low dimensional systems. We do not pursue it here.} The online procedure is summarized in Algorithm~\ref{alg:online-signal}.
\end{remark}

\begin{remark}[Budget update]
\label{rem:sidar-budget-update}
The remaining budget is part of the augmented game state in the SiDAR. Given initial condition $b_0 = \alpha$, the budget evolves deterministically according to
\[
b_k = \alpha - \sum_{j=0}^{k-1} \normf{w_j}^2 \qquad b_{k+1} = b_k - \normf{w_k}^2
\]
and the controller uses $b_k$ directly when computing \eqref{eq:signal-k-opt}.
\end{remark}

\begin{algorithm}[h]
\caption{Online implementation of nonlinear optimal policy for SiDAR}
\label{alg:online-signal}
\begin{algorithmic}[1]
\STATE \textbf{Input:} Horizon $N$, system matrices $(A,B,G)$, weights $(Q,R,P_f)$, budget $\alpha$
\STATE Initialize remaining budget $b_0 = \alpha$
\FOR{$k = 0, 1, \ldots, N-1$}
    \STATE Measure current state $x_k$
    \STATE Solve optimization \eqref{eq:signal-k-opt} to obtain $\lambda^*(x_k, b_k)$
    \STATE Compute gain $K_k(\lambda^*(x_k, b_k))$ from \eqref{eq:gain-def-signal}
    \STATE Apply control $u_k = K_k(\lambda^*(x_k, b_k)) x_k$
    \STATE System evolves: $x_{k+1} = Ax_k + Bu_k + Gw_k$
    \STATE Update budget: $b_{k+1} = b_k - |w_k|^2$
\ENDFOR
\end{algorithmic}
\end{algorithm}

\section{Solution Regions and Properties}
\label{sec:control-region}
The SiDAR \eqref{signaldp-w} features two solution regions in the space of the initial state $x_0$ for a given disturbance budget $\alpha$. For notational simplicity, we develop the results for the initial problem with state $x_0$ and budget $\alpha$; the results apply at each stage $k$ with current state $x_k$ and remaining budget $b_k = \alpha - \sum_{j=0}^{k-1}|w_j|^2$ by replacing $x_0 \to x_k$, $\alpha \to b_k$, and using the tail problem from stage $k$ to $N$.

Let Assumptions 1-3 hold.
\begin{definition}[Solution regions for SiDAR] \hfill
\begin{enumerate}
    \item  Region $\mathcal{X}_{L}(\alpha) \subseteq \bbR^n$ is the initial states $x_0$ for which $\lambda^*(x_0) = \lambda_1$ is optimal in problem $\mathbf{L}_{si}$ \eqref{lsi}
    \item Region $\mathcal{X}_{NL}(\alpha) \subseteq \bbR^n$ is the initial states $x_0$ for which $\lambda^*(x_0) >  \lambda_1$ is optimal in problem $\mathbf{L}_{si}$ \eqref{lsi}
\end{enumerate}
\end{definition}

The solution region geometry is determined in \cref{prop:shape_signal}. Recall from \cref{prop:derivative} the disturbance stationary point $\mathbf{z}^*(\lambda) = \tilde{J}(\lambda) x_0$ where
\[
\tilde{J}(\lambda) \eqbyd \begin{bmatrix}
J_0(\lambda) \\
J_1(\lambda) \Phi_{1,0}(\lambda) \\
\vdots \\
J_{N-1}(\lambda) \Phi_{N-1,0}(\lambda)
\end{bmatrix}
\]
with $J_k(\lambda) \eqbyd -\begin{bmatrix} 0 & I \end{bmatrix} M_k(\lambda)^{-1} d_k(\lambda)$, $F_k(\lambda) \eqbyd A + BK_k(\lambda) + GJ_k(\lambda)$, and $\Phi_{k,j}(\lambda) \eqbyd F_{k-1}(\lambda)F_{k-2}(\lambda)\cdots F_j(\lambda)$ for $j < k$ and $\Phi_{k,k}(\lambda) \eqbyd I$.

\begin{proposition}[Region $\mathcal{X}_{L}(\alpha)$]
\label{prop:shape_signal}
The region $\mathcal{X}_{L}(\alpha)$ is given by
$$
\mathcal{X}_{L}(\alpha) = \left\{x_0 \in \bbR^n \,\bigg|\, \frac{x_0'}{\sqrt{\alpha}} \tilde{J}(\lambda_1)'\tilde{J}(\lambda_1)\frac{x_0}{\sqrt{\alpha}} \leq 1\right\}
$$
Thus, $\mathcal{X}_{L}(\alpha)$ is an ellipsoid centered at the origin.
\end{proposition}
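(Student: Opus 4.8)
The plan is to recognize that \cref{prop:shape_signal} is a geometric restatement of the endpoint optimality condition in item~3 of \cref{prop:derivative}; essentially all of the analytic work (well-definedness of $\mathbf{z}^*(\lambda_1)$, the derivative formula \eqref{eq:dL-dlambda}, and the characterization of when $\lambda^*=\lambda_1$) has already been carried out there, so the proof reduces to unwinding the definition of $\mathcal{X}_L(\alpha)$ and observing that a Gram matrix induces an ellipsoidal sublevel set.

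First I would unwind the definition: by construction $x_0 \in \mathcal{X}_L(\alpha)$ exactly when the optimal multiplier of $\mathbf{L}_{si}$ attains the left endpoint, $\lambda^*(x_0)=\lambda_1$. I would then invoke item~3 of \cref{prop:derivative} (which, via the cited chain, rests on Assumptions 1--4), furnishing the sharp equivalence $\lambda^*(x_0)=\lambda_1 \Longleftrightarrow |\mathbf{z}^*(\lambda_1)|^2 \leq \alpha$. It is important here that $\mathbf{z}^*(\lambda_1)=\tilde{J}(\lambda_1)x_0$ is genuinely well-defined at the boundary point $\lambda=\lambda_1$; this holds because each $M_k(\lambda)$ is invertible on all of $\mathcal{D}=[\lambda_1,\infty)$ by \cref{prop:range-inclusion-invertible} together with Assumptions 2--3, as established in \cref{prop:ndsignal,prop:derivative}. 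Substituting $|\mathbf{z}^*(\lambda_1)|^2 = x_0'\tilde{J}(\lambda_1)'\tilde{J}(\lambda_1)x_0$ and dividing through by $\alpha$ converts the budget inequality into $\frac{x_0'}{\sqrt{\alpha}}\tilde{J}(\lambda_1)'\tilde{J}(\lambda_1)\frac{x_0}{\sqrt{\alpha}} \leq 1$, which is precisely the claimed set.

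For the geometric conclusion I would note that $\tilde{J}(\lambda_1)'\tilde{J}(\lambda_1)$ is a Gram matrix, hence symmetric and positive semidefinite, so its unit sublevel set is an ellipsoid; it is centered at the origin because the defining quadratic form carries neither a linear nor a constant term.

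Since the analytic content is inherited from \cref{prop:derivative}, I do not expect a serious analytic obstacle; the one point requiring genuine care is the precise meaning of ``ellipsoid.'' The sublevel set is bounded (a nondegenerate ellipsoid) only when $\tilde{J}(\lambda_1)$ has full column rank, i.e., $\tilde{J}(\lambda_1)'\tilde{J}(\lambda_1)\succ 0$. When $\tilde{J}(\lambda_1)$ has a nontrivial nullspace --- for instance whenever the stacked disturbance dimension $Nq$ is smaller than the state dimension $n$ --- the region degenerates into an unbounded ellipsoid along $\mathcal{N}(\tilde{J}(\lambda_1))$, corresponding to initial-state directions along which the worst-case disturbance vanishes and the linear $H_{\infty}$ policy is always optimal. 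I would therefore state the result allowing the degenerate case, or, if a genuinely bounded ellipsoid is intended, add the hypothesis that $\tilde{J}(\lambda_1)$ has full column rank and verify it from the structure of the stationary conditions \eqref{eq:stacked-stationary}.
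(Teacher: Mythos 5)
Your proof is correct and follows essentially the same route as the paper's: both reduce the claim to item~3 of \cref{prop:derivative}, substitute $\mathbf{z}^*(\lambda_1)=\tilde{J}(\lambda_1)x_0$, rescale by $\alpha$, and conclude from positive semidefiniteness of the Gram matrix $\tilde{J}(\lambda_1)'\tilde{J}(\lambda_1)$. Your added caveats---well-definedness of $\tilde{J}(\lambda_1)$ at the endpoint $\lambda=\lambda_1$ and the possibly degenerate (unbounded) ellipsoid when $\tilde{J}(\lambda_1)'\tilde{J}(\lambda_1)$ is singular---are sound refinements that the paper leaves implicit, since its own proof asserts only $\tilde{J}(\lambda_1)'\tilde{J}(\lambda_1)\succeq 0$.
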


\begin{proof}
From \cref{prop:derivative} item 3, the optimal multiplier satisfies $\lambda^* = \lambda_1$ if and only if $|\mathbf{z}^*(\lambda_1)|^2 \leq \alpha$. Since $\mathbf{z}^*(\lambda_1) = \tilde{J}(\lambda_1) x_0$, we have
\[
|\mathbf{z}^*(\lambda_1)|^2 = x_0' \tilde{J}(\lambda_1)'\tilde{J}(\lambda_1) x_0
\]
Therefore $\lambda^*(x_0) = \lambda_1$ if and only if $x_0' \tilde{J}(\lambda_1)'\tilde{J}(\lambda_1) x_0 \leq \alpha$, which is equivalent to
\[
\frac{x_0'}{\sqrt{\alpha}} \tilde{J}(\lambda_1)'\tilde{J}(\lambda_1)\frac{x_0}{\sqrt{\alpha}} \leq 1
\]
Since $\tilde{J}(\lambda_1)'\tilde{J}(\lambda_1) \succeq 0$, this defines an ellipsoid centered at the origin.
\end{proof}

\begin{corollary}[Region $\mathcal{X}_{NL}(\alpha)$]
\label{cor:shape_signal_nl}
Region $\mathcal{X}_{NL}(\alpha)$ is given by
$$\mathcal{X}_{NL}(\alpha) = \bbR^n \setminus \mathcal{X}_{L}(\alpha)$$
\end{corollary}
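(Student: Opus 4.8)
The plan is to derive this corollary directly from the definition of the two solution regions together with the optimality dichotomy established in \cref{prop:derivative}. The scalar optimization $\mathbf{L}_{si}$ in \eqref{lsi} is posed over the feasible domain $[\lambda_1,\infty)$, and \cref{prop:derivative} guarantees, via convexity and coercivity of $L(\lambda)$ and the Weierstrass theorem, that a minimizer $\lambda^*(x_0)$ exists for every $x_0\in\bbR^n$. Because the feasible set is $[\lambda_1,\infty)$, any such minimizer obeys $\lambda^*(x_0)\ge\lambda_1$, so for each fixed $x_0$ exactly one of the two mutually exclusive alternatives $\lambda^*(x_0)=\lambda_1$ or $\lambda^*(x_0)>\lambda_1$ holds.

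First I would recall that, by definition, $\mathcal{X}_{L}(\alpha)$ is the set of initial states with $\lambda^*(x_0)=\lambda_1$ and $\mathcal{X}_{NL}(\alpha)$ is the set with $\lambda^*(x_0)>\lambda_1$. The dichotomy of the preceding paragraph shows these two sets are disjoint and that their union is all of $\bbR^n$; hence they partition $\bbR^n$, which is exactly the assertion $\mathcal{X}_{NL}(\alpha)=\bbR^n\setminus\mathcal{X}_{L}(\alpha)$.

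I expect no genuine obstacle, since the statement is essentially definitional once existence of $\lambda^*$ and the $\lambda^*=\lambda_1$ versus $\lambda^*>\lambda_1$ split are in hand. The one point worth making explicit is that membership is unambiguous even without uniqueness of the minimizer: \cref{prop:derivative} item 3 characterizes $x_0\in\mathcal{X}_{L}(\alpha)$ by the scalar condition $|\mathbf{z}^*(\lambda_1)|^2\le\alpha$, whose negation $|\mathbf{z}^*(\lambda_1)|^2>\alpha$ forces $dL/d\lambda|_{\lambda_1}<0$ and hence $\lambda^*>\lambda_1$. Thus the two regions are cut out by complementary conditions on the single quadratic form $x_0'\tilde{J}(\lambda_1)'\tilde{J}(\lambda_1)x_0$, confirming the partition and identifying $\mathcal{X}_{NL}(\alpha)$ as the exterior of the ellipsoid $\mathcal{X}_{L}(\alpha)$ from \cref{prop:shape_signal}.
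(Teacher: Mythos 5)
Your proposal is correct and follows essentially the same route as the paper: both invoke the dichotomy $\lambda^*(x_0)=\lambda_1$ versus $\lambda^*(x_0)>\lambda_1$ from \cref{prop:derivative} item 3 together with the region definitions and \cref{prop:shape_signal} to conclude that the two regions partition $\bbR^n$. Your additional remark resolving potential nonuniqueness of the minimizer via the complementary scalar conditions $|\mathbf{z}^*(\lambda_1)|^2\le\alpha$ versus $|\mathbf{z}^*(\lambda_1)|^2>\alpha$ is a small but welcome extra precision over the paper's terser argument.
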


\begin{proof}
From \cref{prop:derivative} item 3, the optimal multiplier satisfies either $\lambda^* = \lambda_1$ or $\lambda^* > \lambda_1$. Since $\mathcal{X}_{L}(\alpha)$ defines all $x_0$ with $\lambda^*(x_0) = \lambda_1$ by \cref{prop:shape_signal}, the complement $\bbR^n \setminus \mathcal{X}_{L}(\alpha)$ defines all $x_0$ with $\lambda^*(x_0) > \lambda_1$.
\end{proof}

\Cref{prop:linear_signal} shows that within $\mathcal{X}_L(\alpha)$ the optimal control at the initial stage is linear in $x_0$ with a state independent gain $K_0(\lambda_1)$ obtained by evaluating the stationary conditions of \cref{prop:1dsignal} at $\lambda^* = \lambda_1$.

\begin{proposition}[Linear control in region $\mathcal{X}_{L}(\alpha)$]
\label{prop:linear_signal}
For a fixed $\alpha$ and $x_0 \in \mathcal{X}_L(\alpha)$, the optimal control policy at the initial stage is linear in the initial state
\begin{equation*}
    u^*_0(x_0) = K_0(\lambda_1) x_0
\end{equation*}
where the gain matrix $K_0(\lambda_1)$ is state independent and given by
\begin{align*}
K_0(\lambda_1) &= -\begin{bmatrix} I & 0 \end{bmatrix} 
      \begin{bmatrix}
         R + B'\Pi_{1}(\lambda_1)B & B'\Pi_{1}(\lambda_1)G \\
         G'\Pi_{1}(\lambda_1)B & G'\Pi_{1}(\lambda_1)G - \lambda_1 I
       \end{bmatrix}^{-1} \\
&\quad \begin{bmatrix}
         B'\Pi_{1}(\lambda_1)A \\[2pt] 
         G'\Pi_{1}(\lambda_1)A
       \end{bmatrix}
\end{align*}
and $\Pi_{1}(\lambda_1)$ is computed via the recursion \eqref{signalrec1}.
\end{proposition}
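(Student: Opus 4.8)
The plan is to reduce the claim to the stationary condition already established in \cref{prop:ndsignal} and then exploit the defining property of the region $\mathcal{X}_L(\alpha)$. First I would recall that, by \cref{prop:ndsignal} item 1, the optimal control at stage $k=0$ satisfies
$$
M_0(\lambda^*)\begin{bmatrix} u_0 \\ z_0 \end{bmatrix}^* = -\begin{bmatrix} B' \\ G'\end{bmatrix}\Pi_1(\lambda^*)A\,x_0,
$$
where $\lambda^* = \lambda^*(x_0)$ is the minimizer of the scalar program $\mathbf{L}_{si}$ \eqref{lsi}. Since $\lambda^* \geq \lambda_1$ and $M_0(\lambda)$ is invertible on $[\lambda_1,\infty)$ (shown in the proof of \cref{prop:ndsignal} via \cref{prop:range-inclusion-invertible} together with Assumptions 2-3), I can left-multiply by $M_0(\lambda^*)^{-1}$ and extract the first block component to obtain $u_0^* = K_0(\lambda^*)\,x_0$ with the gain $K_0(\lambda)$ defined in \eqref{eq:gain-def-signal}.

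The second step is the key observation that pins the multiplier. By the definition of the linear region, $x_0 \in \mathcal{X}_L(\alpha)$ means precisely that $\lambda^*(x_0) = \lambda_1$. Substituting into the gain yields $u_0^* = K_0(\lambda_1)\,x_0$. Because $\lambda_1$ is a fixed scalar determined entirely by the problem data (through the recursion defining $\Pi_1(\lambda_1)$ and the feasibility construction for $\lambda_1$), the block matrix $M_0(\lambda_1)$ and hence $K_0(\lambda_1)$ carry no dependence on $x_0$. Writing $M_0(\lambda_1)$ out explicitly recovers the block matrix displayed in the statement, so the claimed closed form follows directly from \eqref{eq:gain-def-signal} evaluated at $\lambda=\lambda_1$.

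There is no substantive obstacle here: the entire content lies in recognizing why the generically nonlinear policy collapses to a linear one on $\mathcal{X}_L(\alpha)$. In general the map $u_0^*(x_0) = K_0(\lambda^*(x_0))\,x_0$ is nonlinear because the optimal multiplier $\lambda^*(x_0)$ varies with the state (as emphasized following \eqref{eq:gain-def-signal}). The single point worth stressing is that on $\mathcal{X}_L(\alpha)$ the condition $|\mathbf{z}^*(\lambda_1)|^2 \leq \alpha$ from \cref{prop:derivative} item 3 forces the convex minimizer of $\mathbf{L}_{si}$ to the left endpoint $\lambda_1$, so the state-to-multiplier map is constant and the composition $x_0 \mapsto K_0(\lambda^*(x_0))$ degenerates to the fixed matrix $K_0(\lambda_1)$. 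This identifies the SiDAR policy on $\mathcal{X}_L(\alpha)$ with standard linear $H_\infty$ state feedback at the fixed attenuation level $\lambda_1$.
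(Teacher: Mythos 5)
Your proposal is correct and follows essentially the same route as the paper's proof: both invoke the stationarity condition \eqref{ndsc2} at stage $k=0$, invert $M_0(\lambda^*)$ (justified via \cref{prop:range-inclusion-invertible} and Assumptions 2--3), and then use the definition of $\mathcal{X}_L(\alpha)$ to pin $\lambda^*(x_0)=\lambda_1$, so the gain is a fixed matrix independent of $x_0$. Your closing paragraph tying the collapse to $|\mathbf{z}^*(\lambda_1)|^2\le\alpha$ via \cref{prop:derivative} is accurate but not needed here, since the region is \emph{defined} by $\lambda^*(x_0)=\lambda_1$; that connection is the content of \cref{prop:shape_signal}.
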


\begin{proof}
The optimal control $u^*_0(x_0)$ for \eqref{signaldp-w} is given by \eqref{ndsc2} at stage $k=0$
\[
 \begin{bmatrix} B'\Pi_{1}B + R & B'\Pi_{1} G \\ (B'\Pi_{1}G)' & G'\Pi_{1}G - \lambda^* I  \end{bmatrix}
\begin{bmatrix} u_0 \\ z_0 \end{bmatrix}^* = -\begin{bmatrix} B' \\ G'\end{bmatrix} \Pi_{1}A \; x_0
\]
For a fixed $\alpha$, if $x_0 \in \mathcal{X}_{L}(\alpha)$, then from the definition we have $\lambda^*(x_0) = \lambda_1$. Define $$M_0(\lambda^*) \eqbyd \begin{bmatrix} B'\Pi_{1}B + R & B'\Pi_{1} G \\ (B'\Pi_{1}G)' & G'\Pi_{1}G - \lambda^* I \end{bmatrix}$$ and $d_0 \eqbyd \begin{bmatrix} B'\Pi_{1}A \\ G'\Pi_{1}A \end{bmatrix}$. By the nonsingularity of $M_0(\lambda^*)$ we have
\[
\begin{bmatrix}u_0\\ z_0\end{bmatrix}^* = -M_0(\lambda^*)^{-1} d_0 x_0
\]
so $u_0^*(x_0)=-\begin{bmatrix}I&0\end{bmatrix}M_0(\lambda^*)^{-1}d_0 x_0 =: K_0(\lambda_1) x_0$. Since $\lambda^*(x_0) = \lambda_1$ is independent of $x_0$ for all $x_0 \in \mathcal{X}_L(\alpha)$, the gain matrix $K_0(\lambda_1)$ is constant (state independent), yielding a linear feedback policy in $x_0$. 
\end{proof}

\begin{remark}
    In the region \(\mathcal X_{NL}(\alpha)\) the solution
\(\lambda^*(x_0)\), and hence \(u^*(x_0)\), is a nonlinear
function of the initial state $x_0$ for a fixed $\alpha$.
\end{remark}

\begin{remark}
    For the region $\mathcal{X}_{NL}(\alpha)$, recursion \eqref{signalrec1} can be written as
    \[
    \Pi_{k} = Q + A' \Psi_{k+1}(I+B'R^{-1}B\Psi_{k+1})^{-1} A
    \]
    where $\Psi_{k+1} \eqbyd \Pi_{k+1} - \Pi_{k+1}G(G'\Pi_{k+1}G-\lambda I)^{-1}G'\Pi_{k+1} \succeq 0$.

Let Assumption 4 hold. Then recursion \eqref{signalrec1} applied in the region $\mathcal{X}_{NL}(\alpha)$ can be written as
    \[
    \Pi_{k} = Q + A' \Pi_{k+1}(I+(BR^{-1}B'-(1/\lambda)GG')\Pi_{k+1})^{-1}A
    \]
    This is an algebraic rewriting of \eqref{signalrec1}, valid for $\lambda > \norm{G'\Pi_{k+1} G}$, the condition that $\lambda^*(x_0) > \lambda_1$ satisfies throughout $\mathcal{X}_{NL}(\alpha)$. The same form was derived by \citet[p.\,86]{basar:bernhard:1995} for the standard $H_\infty$ state feedback problem at zero initial state. In our framework, zero initial state corresponds to $\mathcal{X}_L(\alpha)$ at $\lambda^* = \lambda_1$ (\cref{prop:linear_signal}), on the boundary of the recursive feasibility domain $[\lambda_1, \infty)$. The recursion is shared, but its scope here is broader: \citet{basar:bernhard:1995} cover one point of the state space (zero initial state, on the boundary), whereas the present paper uses the same recursion across the entire state space, with $\lambda^*(x_0) > \lambda_1$ in the interior of the feasibility domain when $x_0 \in \mathcal{X}_{NL}(\alpha)$. The closer prior work on nonzero initial states is \citet{didinsky:basar:1992}, who first identified a partition into regions analogous to $\mathcal{X}_L(\alpha)$ and $\mathcal{X}_{NL}(\alpha)$. They reformulated the problem into an auxiliary strongly dual one and did not give explicit solutions across both regions. The present derivation instead solves \eqref{signaldp-w} directly with the multiplier from \eqref{lsi}, in both regions.
\end{remark}

\begin{remark}
    The SiDAR \eqref{signaldp-w} features a unique solution region over the entire horizon length.
\end{remark}

\begin{remark}
    \label{regionzero}
    The ratio $\norm{x_0}/\sqrt{\alpha}$, which controls the size between the state and the disturbance, determines the region size. The zero state $x = 0$ is trivially contained in $\mathcal{X}_L(\alpha)$.
\end{remark}

\begin{remark}
The result in \cref{prop:shape_signal} applies at each stage $k$ with remaining budget $b_k = \alpha - \sum_{j=0}^{k-1}|w_j|^2$. At stage $k$ with current state $x_k$ and remaining budget $b_k$, the region where $\lambda^*(x_k, b_k) = \lambda_{k+1}$ is
\[
\mathcal{X}_L(b_k) = \left\{x_k \in \bbR^n \,\bigg|\, \frac{x_k'}{\sqrt{b_k}} \tilde{J}_k(\lambda_{k+1})'\tilde{J}_k(\lambda_{k+1})\frac{x_k}{\sqrt{b_k}} \leq 1\right\}
\]
where $\tilde{J}_k(\lambda_{k+1})$ is constructed from the remaining stages $k$ through $N-1$ and $\lambda_{k+1}$ is the feasibility lower bound from \cref{prop:ndsignal}.
While \cref{prop:linear_signal} establishes that the policy is linear in $x_0$ for fixed total budget $\alpha$, the optimal policy is nonlinear in $(x_k, b_k)$. At each stage, a new optimization over $\lambda$ must be solved with the remaining budget $b_k$, yielding an optimal multiplier $\lambda^*(x_k, b_k)$ and gain matrix $K_k(\lambda^*(x_k, b_k))$ that depend nonlinearly on $b_k$. Therefore, the policy $u_k = K_k(\lambda^*(x_k, b_k))x_k$ is nonlinear in $(x_k, b_k)$.
\end{remark}

\section{Numerical Examples}
\label{sec:num-example}
The following examples illustrate the finite horizon SiDAR. \Cref{sec:scalar-example} uses a scalar system to visualize the optimal policy and the partition into $\mathcal{X}_L$ and $\mathcal{X}_{NL}$. \Cref{sec:timing} sweeps the state dimension up to $n = 300$ to assess the execution time of \cref{alg:online-signal} for randomly generated systems with various $(m, q)$.

\subsection{Optimal control surface}
\label{sec:scalar-example}
Consider the scalar system
\begin{equation*}
A=0.5 \; \; \; B=1 \; \; \; G=1 \; \; \; R=1 \; \; \; Q=0.25 \; \; \; P_f=0.25
\end{equation*}
with horizon $N=10$ and disturbance budget $\alpha = 1$.

\Cref{fig:ustar_XL} illustrates the optimal control $u^*_0(x_0, \lambda^*(x_0,b_0))$ at the initial stage from \eqref{eq:nonlinear-policy-signal} as a function of state $x_0$ and remaining budget $b_0$. Note that $b_0 = \alpha$ at the initial stage. The top panel displays contour lines of $u^*_0(x_0, \lambda^*(x_0,b_0))$ with the shaded region indicating $\mathcal{X}_L(b_0)$ from \cref{prop:shape_signal}, where the optimal multiplier equals its lower bound $\lambda^* = \norm{G'\Pi_1(\lambda^*)G}$ and the policy is linear in $x_0$ for fixed $b_0$. The middle panel shows the cross-section $u^*(x_0, \lambda^*(x_0,1))$ for fixed budget $b_0=1$, with the shaded region indicating $\mathcal{X}_L(1)$. The bottom panel shows the cross-section $u^*_0(0.5, \lambda^*(0.5,b_0))$ for fixed state $x_0=0.5$, with the shaded region indicating the values of $b_0$ for which $(0.5, b_0) \in \mathcal{X}_L(b_0)$. Outside the shaded regions in all three panels, the policy is nonlinear as $\lambda^* > \norm{G'\Pi_1(\lambda^*)G}$, demonstrating the state-dependent transition between linear and nonlinear control regions characteristic of the SiDAR. At the boundary $b_0=0$, the policy recovers standard LQR control.

\begin{figure}
\centering
\includegraphics[width=1\linewidth]{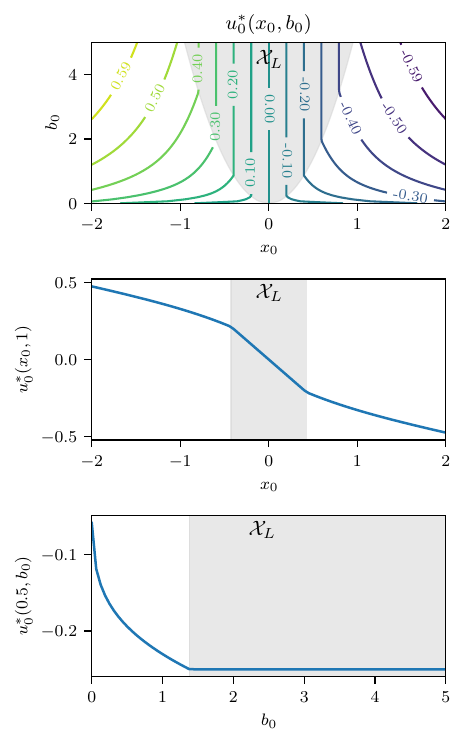}
\caption{Optimal control $u^*_0(x_0, b_0)$ for nondegenerate scalar system with $N=10$. Top: contour lines with shaded region $\mathcal{X}_L(b_0)$ where the policy is linear in $x_0$ for a fixed $b_0$. Middle: cross-section $u^*_0(x_0, 1)$ versus $x_0$ with shaded region $\mathcal{X}_L(1)$. Bottom: cross-section $u^*_0(0.5, b_0)$ versus $b_0$ with shaded region indicating $\mathcal{X}_L$ for fixed $x_0=0.5$. Note that $b_0 = \alpha$.}
\label{fig:ustar_XL}
\end{figure}

\subsection{Computational scaling}
\label{sec:timing}

\Cref{fig:timing} reports the execution time of \cref{alg:online-signal} at one stage as a function of the state dimension $n$, for randomly generated systems satisfying \cref{asst1,asst2,asst3,asst4}: $A$ is generated by sampling a standard Gaussian matrix and rescaling so that $\rho(A) = 1/1.05 \approx 0.95$; $B \in \bbR^{n \times m}$ has standard Gaussian entries; $G = BM$ with $M \in \bbR^{m \times q}$ Gaussian; $Q = I_n$, $R = I_m$, $P_f = 0.25\,I_n$, $N = 10$. Three combinations of $(m, q)$ are reported, each with five independent trials per dimension; the curves give the median execution time and the bands give the minimum and maximum across trials. The dominant cost per Riccati step is the dense linear algebra of the block solve and matrix multiplications, with overall complexity bounded by $O((n+m+q)^3)$; the curves approach this cubic reference at moderate to large $n$, while at small $n$ they saturate at the constant overhead of the implementation. The largest configuration $m = q = n = 300$ solves in approximately ten seconds.

\begin{figure}
\centering
\includegraphics[width=1\linewidth]{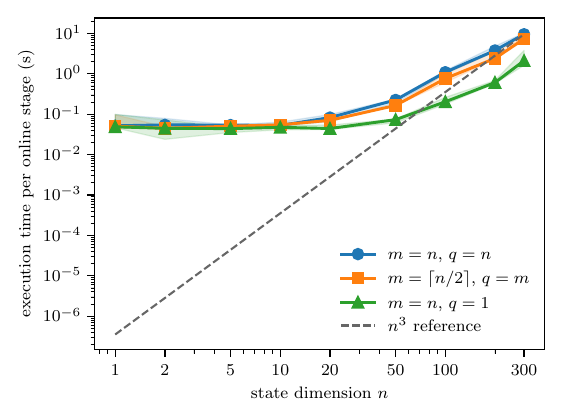}
\caption{Execution time of \cref{alg:online-signal} at one stage versus state dimension $n$, for three combinations of $(m, q)$ and randomly generated systems satisfying \cref{asst1,asst2,asst3,asst4}. Curves: median over five trials. Bands: minimum to maximum across trials. Dashed line: $cn^3$ reference, with $c$ chosen so the line passes through the $m = q = n$ curve at $n = 300$.}
\label{fig:timing}
\end{figure}

\section{Summary}
\label{sec:end}
This work presents a finite horizon recursive solution to the SiDAR for linear systems with arbitrary initial states. Existing theoretical results were limited to the zero initial state assumption, deriving policies valid only at the origin.

The optimal control policy at stage $k$ requires solving a tractable convex scalar optimization over the multiplier $\lambda$ given the current state $x_k$ and remaining disturbance budget $b_k$; the control gain is then explicit. The backward Riccati recursion operates in the state $x$ alone. The resulting control law is nonlinear in $x$ through the dependence of the optimal multiplier $\lambda^*(x_k, b_k)$ on the state. For fixed remaining budget $b_k$ at stage $k$, the state space partitions into two distinct regions: $\mathcal{X}_L(b_k)$, where the control policy is linear in $x$, and $\mathcal{X}_{NL}(b_k)$, where the control policy is nonlinear in $x$.

We establish monotonicity and boundedness of the associated Riccati recursion. The region $\mathcal{X}_L(b_k)$ is an ellipsoid centered at the origin, whose geometry is determined by the Lagrangian stationary disturbance gains. The derivative of the value function with respect to the Lagrange multiplier provides optimality conditions that distinguish the linear and nonlinear solution regions.

A companion paper \citep{mannini:rawlings:2026b} extends these results to the infinite horizon setting, classifying systems as degenerate or nondegenerate, establishing convergence properties, and reducing the infinite horizon problem to a tractable LMI optimization.

\bibliographystyle{abbrvnat}
\bibliography{paper-ia_arxiv}

\section{Appendix}
\label{sec:props}

In this appendix, we compile the fundamental results used throughout this paper.

The following classical result justifies the interchange of minimization and maximization in the Lagrangian analysis of \Cref{sec:signal}.

\begin{theorem}[Minimax Theorem]
\label{th:minimax}
Let $U \subset \bbR^m$ and $W \subset \bbR^q$ be compact convex sets. If 
$V: U \times W \to \bbR$ is a continuous function that is convex-concave, i.e.,
$V(\cdot ,w):U \to \bbR$ is convex for all $w \in W$, and
$V(u, \cdot ):W \to \bbR$ is concave for all $u \in U$\\
Then we have that
\begin{equation*}
\min_{u \in U} \max_{w \in W} V(u,w) = \max_{w \in W} \min_{u \in U} V (u,w) 
\end{equation*}
\end{theorem}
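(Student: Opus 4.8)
The plan is to prove the theorem by establishing the two inequalities
\[
\max_{w} \min_{u} V(u,w) \;\le\; \min_{u} \max_{w} V(u,w), \qquad \min_{u} \max_{w} V(u,w) \;\le\; \max_{w} \min_{u} V(u,w),
\]
the first of which (weak duality) is elementary and the second of which carries all the content. Weak duality follows from the pointwise bound $\min_{u'} V(u',w) \le V(u,w) \le \max_{w'} V(u,w')$, valid for every $(u,w)$: taking $\max_{w}$ on the left, then $\min_{u}$ on the right, yields the claim. All extrema here are attained because $V$ is continuous and $U,W$ are compact, so the Weierstrass theorem applies throughout.

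For the reverse inequality I would produce a saddle point $(u^\ast,w^\ast)\in U\times W$, i.e.\ a pair satisfying $V(u^\ast,w)\le V(u^\ast,w^\ast)\le V(u,w^\ast)$ for all $u,w$; any such pair forces both sides to equal the common value $V(u^\ast,w^\ast)$, giving equality. To produce it, define the best-response correspondences $\Phi(w)\eqbyd\arg\min_{u\in U} V(u,w)$ and $\Psi(u)\eqbyd\arg\max_{w\in W} V(u,w)$. Each is nonempty by continuity and compactness, and each has convex values: $\Phi(w)$ because $V(\cdot,w)$ is convex, and $\Psi(u)$ because $V(u,\cdot)$ is concave. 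The product map $T(u,w)\eqbyd\Phi(w)\times\Psi(u)$ sends the compact convex set $U\times W$ into its nonempty convex subsets, and Berge's maximum theorem (using continuity of $V$ and compactness of the domains) makes $\Phi$ and $\Psi$ upper hemicontinuous with closed graph. Kakutani's fixed point theorem then furnishes $(u^\ast,w^\ast)\in T(u^\ast,w^\ast)$, which unpacks exactly to the saddle-point inequalities above; chaining $\min_u\max_w V \le \max_w V(u^\ast,w)=V(u^\ast,w^\ast)=\min_u V(u,w^\ast)\le\max_w\min_u V$ closes the argument.

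The main obstacle is verifying the hypotheses of Kakutani's theorem, specifically the closed-graph (upper hemicontinuity) property of the $\arg\min$/$\arg\max$ correspondences, which relies on continuity of $V$ together with compactness of $U$ and $W$; by contrast, convexity of the correspondence values is precisely where the convex--concave structure of $V$ enters. Alternatives that sidestep fixed points include invoking Sion's minimax theorem directly, or the classical separating-hyperplane argument on epigraph-type convex sets, but the Kakutani route is the most economical to state once continuity and compactness are already in hand, as they are here.
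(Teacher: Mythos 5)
Your proposal is correct, but note that the paper offers no proof to compare against: \cref{th:minimax} is stated in the appendix as a classical result (it is von Neumann's minimax theorem, a special case of Sion's theorem) and is simply invoked to justify interchanging $\min_{\lambda}$ and $\max_{\wseq}$ in the Lagrangian analysis. Your argument is the standard fixed-point proof and all steps are sound: weak duality is elementary; nonemptiness of the best-response sets $\Phi(w)$, $\Psi(u)$ follows from Weierstrass; convexity of their values is exactly where the convex--concave hypothesis enters; and Kakutani applied to $T(u,w)=\Phi(w)\times\Psi(u)$ yields a saddle point, which forces equality. Two small remarks. First, you do not actually need Berge's maximum theorem: the closed-graph property of $\Phi$ follows directly from continuity of $V$ --- if $u_n'\in\Phi(w_n)$ and $(u_n',w_n)\to(u',w)$, then passing to the limit in $V(u_n',w_n)\le V(u,w_n)$ for each fixed $u$ gives $u'\in\Phi(w)$ --- and likewise for $\Psi$; since the values are automatically closed subsets of a compact set, this suffices for Kakutani. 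Second, be aware of the trade-off your route carries: the Kakutani proof delivers more than the theorem asks (an actual saddle point, not just equality of values), at the cost of resting on a nontrivial fixed-point theorem; the alternatives you mention (Sion's theorem, or a separating-hyperplane argument on the relevant convex sets) prove equality of values with only convexity tools, which is closer in spirit to how the paper uses the result.
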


The following result guarantees invertibility of the block matrix $M_k(\lambda)$ appearing in the Riccati recursion \eqref{signalrec1}, ensuring well-posedness of the finite horizon solution under Assumption~\ref{asst2}.

\begin{proposition}[Invertibility under range inclusion]
\label{prop:range-inclusion-invertible}
Let ${\Pi}\succeq 0$, $R\succ 0$, and ${\lambda}>0$. Assume $G'{\Pi}G-{\lambda}I\preceq 0$. If $\mathcal{R}(G)\subseteq \mathcal{R}(B)$ (equivalently, $\mathcal{N}(B')\subseteq \mathcal{N}(G')$), then the block matrix
\[
   M \eqbyd
   \begin{bmatrix}
      B'{\Pi}B + R & B'{\Pi}G \\
      G' \Pi B & G'{\Pi}G - {\lambda} I
   \end{bmatrix}
\]
is nonsingular.
\end{proposition}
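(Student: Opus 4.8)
The plan is to exploit the symmetry and block structure of $M$ via a Schur complement. Since $R\succ 0$ and $B'\Pi B\succeq 0$ (from $\Pi\succeq 0$), the top-left block $B'\Pi B+R\succ 0$ is invertible, so $M$ is nonsingular if and only if its Schur complement
\[
S \eqbyd (G'\Pi G-\lambda I) - G'\Pi B\,(B'\Pi B+R)^{-1}B'\Pi G
\]
is nonsingular. First I would observe that $S\preceq 0$: the term $G'\Pi G-\lambda I\preceq 0$ holds by hypothesis, while $G'\Pi B(B'\Pi B+R)^{-1}B'\Pi G\succeq 0$ because $(B'\Pi B+R)^{-1}\succ 0$, so subtracting a positive semidefinite matrix from a negative semidefinite one preserves negative semidefiniteness. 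A negative semidefinite matrix is nonsingular precisely when it is negative definite, so it remains to rule out a nontrivial kernel.

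Suppose $Sv=0$ for some $v\neq 0$. Then $v'Sv=0$, and since this is the sum of $v'(G'\Pi G-\lambda I)v\leq 0$ and $-v'G'\Pi B(B'\Pi B+R)^{-1}B'\Pi G v\leq 0$, each quadratic form must vanish separately. From the first, because $G'\Pi G-\lambda I\preceq 0$, vanishing of the quadratic form forces $(G'\Pi G-\lambda I)v=0$, i.e., $G'\Pi G v=\lambda v$. From the second, because $(B'\Pi B+R)^{-1}\succ 0$, vanishing forces $B'\Pi G v=0$.

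The crux is converting these two identities into a contradiction using \cref{asst2}. Set $p\eqbyd \Pi G v$. The condition $B'\Pi G v=0$ says $p\in\mathcal{N}(B')$, and by the equivalence $\mathcal{R}(G)\subseteq\mathcal{R}(B)\Leftrightarrow\mathcal{N}(B')\subseteq\mathcal{N}(G')$ noted in the statement, we get $p\in\mathcal{N}(G')$, i.e., $G'p=G'\Pi G v=0$. Combined with $G'\Pi G v=\lambda v$, this yields $\lambda v=0$, and since $\lambda>0$ we conclude $v=0$, a contradiction. Therefore $S\prec 0$, so $S$ and hence $M$ is nonsingular.

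I expect the only delicate point to be the strictness argument, namely passing from $S\preceq 0$ to $S\prec 0$. Extracting the two separate vanishing-quadratic-form conditions from $v'Sv=0$ relies on the fact that a semidefinite form vanishing on a vector forces that vector into the kernel of the underlying matrix; once those two conditions are in hand, the range-inclusion hypothesis closes the argument cleanly. The equivalence of the two formulations of \cref{asst2} is the standard orthogonal-complement identity $\mathcal{R}(B)^\perp=\mathcal{N}(B')$, which I would invoke without detailed proof.
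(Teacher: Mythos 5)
Your proof is correct and takes essentially the same route as the paper's: both reduce nonsingularity of $M$ to nonsingularity of the Schur complement with respect to the positive definite block $B'\Pi B+R$, extract $(G'\Pi G-\lambda I)v=0$ and $B'\Pi Gv=0$ from the vanishing of the two semidefinite quadratic forms, and then use $\mathcal{N}(B')\subseteq\mathcal{N}(G')$ together with $\lambda>0$ to force $v=0$. The only cosmetic difference is that you frame the conclusion as $S\prec 0$ by contradiction, whereas the paper directly shows the Schur complement has trivial kernel.
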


\begin{proof}
Write $A\eqbyd B'{\Pi}B+R\succ 0$, $C\eqbyd B'{\Pi}G$, and $D\eqbyd G'{\Pi}G-{\lambda}I\preceq 0$, so that
$M=\begin{bmatrix} A & C \\ C' & D \end{bmatrix}$. Using the partitioned matrix determinant formula, since $A\succ 0$, the matrix $M$ is invertible if and only if its Schur complement $\tM \eqbyd D - C' A^{-1} C$ is nonsingular.

To establish sufficiency, assume $\mathcal{R}(G)\subseteq \mathcal{R}(B)$ and let $v$ satisfy $\tM v = 0$. Then
\[
v'\tM v \;=\; 0 \;=\; v'Dv - (Cv)'A^{-1}(Cv).
\]
Since $D\preceq 0$ and $A^{-1}\succ 0$, we have $v'Dv\le 0$ and $(Cv)'A^{-1}(Cv)\ge 0$, and because they are equal, both are zero:
$v'Dv=(Cv)'A^{-1}(Cv)=0$. As $-D\succeq 0$ and $A^{-1}\succ 0$ admit square roots, this implies $\sqrt{-D}\,v=0$ and $A^{-1/2}Cv=0$, hence $Dv=0$ and $Cv=0$.

From $Cv=B'{\Pi}Gv=0$ we obtain ${\Pi}Gv\in\mathcal{N}(B')\subseteq\mathcal{N}(G')$, so $G'\Pi Gv=0$. Together with $Dv=(G'{\Pi}G-{\lambda}I)v=0$ and ${\lambda}>0$, it follows that $v=0$. Therefore $\tM$ is nonsingular, and hence $M$ is nonsingular.
\end{proof}

The following lemma establishes equivalent forms of the Riccati recursion, expressing the value matrix $\Pi(\lambda)$ in terms of closed-loop quantities. This representation is used in the monotonicity analysis of \Cref{prop:ndsignal}.

\begin{lemma}[Riccati equalities]
\label{receq}
The equality 
\begin{align*}
\Pi(\lambda) &= Q+A'\Pi A- A' \Pi \begin{bmatrix} B & G \end{bmatrix} \\
&\quad \begin{bmatrix} B'\Pi B + R & B'\Pi G \\ (B'\Pi G)' & G'\Pi G - \lambda I  \end{bmatrix}^{\dagger}
    \begin{bmatrix} B' \\ G'\end{bmatrix}\Pi A
\end{align*}
    can be rewritten as
\begin{equation}
            \Pi = \bar{Q} + \bar{A}' \Pi\bar{A} - \bar{A}'\Pi G(G'\Pi G-\lambda I)^{\dagger}G'\Pi\bar{A} \label{xlrec}
        \end{equation}
    where $\bar{A} = A+BK$ and $\bar{Q} = Q + K'RK$ and $K$ satisfies
    \begin{equation}
  \begin{bmatrix}
        B'\Pi B+R & B'\Pi G \\
        G'\Pi B & G'\Pi G-\lambda I
    \end{bmatrix}\begin{bmatrix}
K \\ J
\end{bmatrix} =-\begin{bmatrix}
B'\Pi A \\ G'\Pi A
\end{bmatrix} \label{JK}
\end{equation}
\end{lemma}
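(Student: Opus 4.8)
The plan is to read both displayed expressions as the value matrix of the single quadratic form $q(u,w)\eqbyd x'Qx+u'Ru+(Ax+Bu+Gw)'\Pi(Ax+Bu+Gw)-\lambda w'w$ and to prove their equality by direct algebra driven by the stationary equations \eqref{JK}. I would abbreviate $S\eqbyd B'\Pi B+R$, $U\eqbyd B'\Pi G$, $T\eqbyd G'\Pi G-\lambda I$, $p\eqbyd B'\Pi A$, $r\eqbyd G'\Pi A$, so that $M=\begin{bmatrix} S & U\\ U' & T\end{bmatrix}$ and \eqref{JK} is the coupled pair $SK+UJ=p$ and $U'K+TJ=r$. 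Since $A'\Pi B=p'$ and $A'\Pi G=r'$, the unreduced correction term is $A'\Pi\begin{bmatrix}B & G\end{bmatrix}M^{\dagger}\begin{bmatrix}B'\\ G'\end{bmatrix}\Pi A=\begin{bmatrix}p' & r'\end{bmatrix}\begin{bmatrix}K\\ J\end{bmatrix}=p'K+r'J$. Hence, after cancelling the common $Q$, the claimed identity \eqref{xlrec} reduces to the purely matrix statement
\[
K'SK-J'TJ-K'p+r'J=0 .
\]

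The decisive step is the $w$-correction term. Expanding $\bar A'\Pi\bar A$ and using $B'\Pi B=S-R$ gives $\bar Q+\bar A'\Pi\bar A-A'\Pi A=K'SK-p'K-K'p$ once the closed-loop matrix is taken with the correct sign. I would then collapse $\bar A'\Pi G\,T^{\dagger}G'\Pi\bar A$ by observing, from the second stationary equation, that $G'\Pi\bar A=r-U'K=TJ$, so that $\bar A'\Pi G\,T^{\dagger}G'\Pi\bar A=(TJ)'T^{\dagger}(TJ)=J'TJ$, the pseudoinverse dropping out because $TJ\in\mathcal R(T)$ and $TT^{\dagger}T=T$. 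This collapse is the heart of the lemma: it removes the apparent $T^{-1}$ and leaves only the symmetric form $J'TJ$, after which the identity becomes the three-term relation above.

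To finish I would verify that relation by two substitutions from \eqref{JK}: from $SK=p-UJ$ we get $K'SK=K'p-K'UJ$, and from $r=U'K+TJ$ together with $T=T'$ we get $r'J=K'UJ+J'TJ$; adding these, the cross terms $K'UJ$ cancel and $K'SK-J'TJ-K'p+r'J=0$ follows at once. The only genuinely delicate point is that the leftover cross terms are annihilated precisely because $T$ is symmetric and because \emph{both} equations of the coupled system \eqref{JK} — not just one — are invoked; this is where the structure of the problem is actually used, and it is the step I expect a reader to find least obvious.

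Two bookkeeping matters require care. First, the sign convention: the collapse $G'\Pi\bar A=TJ$ and the expansion of $\bar A'\Pi\bar A$ are consistent only with the closed-loop matrix $\bar A=A-BK$, i.e. with the physically optimal feedback $u^*=-Kx$ read off from \eqref{ndsc2}; one should confirm that this is the intended reading of $\bar A$ in \eqref{JK}, since the identity is sensitive to the relative sign of $K$ and $\bar A$. Second, the pseudoinverses: when $M$ or $T$ is singular the steps must be justified on ranges, which holds because \eqref{JK} is a consistent system (so $(K,J)$ exists) and every use of $T^{\dagger}$ acts on a vector of the form $TJ\in\mathcal R(T)$. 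In the regime actually used in \cref{prop:ndsignal}, \cref{asst2} and \cref{prop:range-inclusion-invertible} make $M$ invertible on the feasibility interval, so the ordinary inverse suffices and these range checks are automatic; I would state the general pseudoinverse version and then note this simplification.
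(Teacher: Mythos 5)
Your proof is correct and is essentially the paper's own argument in a slightly different packaging: both proofs eliminate the pseudoinverse $M^{\dagger}$ by passing to a solution $(K,J)$ of the stationary system \eqref{JK}, and both collapse $(G'\Pi G-\lambda I)^{\dagger}$ by noting it acts on a vector of the form $(G'\Pi G-\lambda I)J \in \mathcal{R}(G'\Pi G-\lambda I)$. In your notation, your reduction of the correction term to $p'K+r'J$ and the residual identity $K'SK-J'TJ-K'p+r'J=0$ is the same computation that the paper organizes as the quadratic form $\begin{bmatrix} K' & J' \end{bmatrix} M \begin{bmatrix} K \\ J \end{bmatrix}$ followed by substitution of the two stationary equations, including the same care about nullspace contributions when $M$ or $T$ is singular.

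The sign issue you flag is real, and you resolved it the right way. The paper's own proof silently substitutes $B'\Pi GJ = -B'\Pi A-(B'\Pi B+R)K$ and $(G'\Pi G-\lambda I)J = -G'\Pi(A+BK)$, i.e., it works with the convention $M\begin{bmatrix} K \\ J \end{bmatrix} = -\begin{bmatrix} B'\Pi A \\ G'\Pi A \end{bmatrix}$, the one matching the stationary conditions \eqref{ndsc2} and the gain definitions in \cref{prop:derivative}, under which $\bar{A}=A+BK$ as stated in \eqref{xlrec}. Under \eqref{JK} as literally displayed (with $+$ on the right-hand side), your conclusion $\bar{A}=A-BK$ is the correct reading; the two versions are equivalent under $(K,J)\mapsto(-K,-J)$, which leaves $\bar{Q}=Q+K'RK$ and the last term of \eqref{xlrec} unchanged. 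So your proof establishes the lemma for the displayed \eqref{JK} and, as a byproduct, identifies a sign inconsistency between the paper's statement and its proof that the authors should repair by flipping the sign on one side or the other.
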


\begin{proof}
From $M^\dagger M M^\dagger=M^ \dagger$ we have
 \begin{align}
         \Pi(\lambda) &=Q+A'\Pi A- A' \Pi \begin{bmatrix} B & G \end{bmatrix}
     M(\lambda)^{\dagger}
    \begin{bmatrix} B' \\ G'\end{bmatrix}\Pi A  \label{recjk} \\ &= Q+A'\Pi A- A' \Pi \begin{bmatrix} B & G \end{bmatrix}
     M(\lambda)^{\dagger}M(\lambda)M(\lambda)^{\dagger}
    \begin{bmatrix} B' \\ G'\end{bmatrix}\Pi A \nonumber
    \end{align}
where
\[
M(\lambda) = \begin{bmatrix} B'\Pi B + R & B'\Pi G \\ (B'\Pi G)' & G'\Pi G - \lambda I  \end{bmatrix}
\]
Define
\[
\begin{bmatrix}
        B'\Pi B+R & B'\Pi G \\
        G'\Pi B & G'\Pi G-\lambda I
    \end{bmatrix}\begin{bmatrix}
K \\ J
\end{bmatrix} =-\begin{bmatrix}
B'\Pi A \\ G'\Pi A
\end{bmatrix}
\]
or equivalently, with $b = \begin{bmatrix} B'\Pi A \\ G'\Pi A \end{bmatrix}$
    \begin{equation}
        \begin{bmatrix}
K \\ J
\end{bmatrix} =-M(\lambda)^{\dagger}b+\mathcal{N}(M(\lambda)) \label{JK2}
    \end{equation}
For any $v \in \mathcal{N}(M(\lambda))$, we have $M(\lambda)v = 0$, which gives $v'M(\lambda)v = 0$ and $(M(\lambda)^{\dagger}b)'M(\lambda)v = b'M(\lambda)^{\dagger}M(\lambda)v = 0$. Therefore, when substituting $\begin{bmatrix} K \\ J \end{bmatrix} = -M(\lambda)^{\dagger}b + v$ into the quadratic form $\begin{bmatrix} K' & J' \end{bmatrix} M(\lambda) \begin{bmatrix} K \\ J\end{bmatrix}$, all terms involving $v$ vanish. Thus, the following expression
\begin{equation}
        \Pi(\lambda) =Q+A'\Pi A- \begin{bmatrix} K' & J' \end{bmatrix}
     M(\lambda)
    \begin{bmatrix} K \\ J\end{bmatrix} \label{kj2}
\end{equation}
is equivalent to \eqref{recjk}. Expanding \eqref{kj2}
\begin{align*}
    \Pi(\lambda) = & Q+A'\Pi A-K'B'\Pi BK -K'RK-K'B'\Pi GJ-J'G'\Pi BK\\&-J'(G'\Pi G-\lambda I)J
\end{align*}
Consider
\[
B'\Pi GJ = -B' \Pi A - (B'\Pi B + R)K
\]
and
\[
J = -(G' \Pi G-\lambda I)^{\dagger}G' \Pi (A+BK)+\mathcal{N}(G' \Pi G-\lambda I)
\]
For any $q \in \mathcal{N}(G' \Pi G-\lambda I)$, we have $(G' \Pi G-\lambda I)q = 0$, giving $q'(G' \Pi G-\lambda I)q = 0$ and $((G' \Pi G-\lambda I)^{\dagger}G' \Pi (A+BK))'(G' \Pi G-\lambda I)q = 0$. Therefore, all terms involving $q$ vanish in the quadratic form $J'(G'\Pi G-\lambda I)J$.
Thus, substituting $B'\Pi GJ$ and $J$ in $\Pi(\lambda) = Q+A'\Pi A-K'B'\Pi BK -K'RK-K'B'\Pi GJ-J'G'\Pi BK-J'(G'\Pi G-\lambda I)J$ we obtain
\begin{align*}
\Pi(\lambda) =& Q+K'RK+(A+BK)'\Pi(A+BK)\\&-(A+BK)'\Pi G(G'\Pi G-\lambda I)^{\dagger}G' \Pi(A+BK)
\end{align*}
which is \eqref{xlrec} with $\bar{A} = A+BK$ and $\bar{Q} = Q + K'RK$.
\end{proof}

\subsection{\textbf{Proofs of Main Results}}
\label{app:proofs}

\begin{proof}[Proof of \cref{prop:1dsignal}]
Define
\begin{align*}
   &M_0(\lambda) \eqbyd \begin{bmatrix}
        B'\Pi_1B+R & B'\Pi_1G \\
        G'\Pi_1B & G'\Pi_1G-\lambda I
    \end{bmatrix} \qquad d_0 \eqbyd \begin{bmatrix}
B'\Pi_1A \\ G'\Pi_1A
\end{bmatrix}x_0  \\
       &M_1(\lambda) \eqbyd \begin{bmatrix}
        B'P_{f}B+R & B'P_{f}G \\
        G'P_{f}B & G'P_{f}G-\lambda I
    \end{bmatrix} \qquad d_1 \eqbyd \begin{bmatrix}
B'P_{f}A \\ G'P_{f}A
\end{bmatrix}x_1
   \end{align*}

\textbf{Stacked problem and Lagrangian setup.} \newline
Represent the linear system \eqref{system} in stacked form as
\begin{gather*}
\mathbf{x} =  \mathcal{A} x_0 + \mathcal{B} \mathbf{u} + \mathcal{G} \mathbf{w}
\end{gather*}
where $\mathbf{x} \eqbyd (x_1, x_2)$, $\mathbf{u} \eqbyd (u_0, u_1)$, $\mathbf{w} \eqbyd (w_0, w_1)$, and
\begin{gather*}
\mathcal{A} \eqbyd \begin{bmatrix} A \\ A^2 \end{bmatrix} \quad
\mathcal{B} \eqbyd  \begin{bmatrix} B & 0 \\ AB & B \end{bmatrix} \quad
\mathcal{G} \eqbyd \begin{bmatrix} G & 0 \\ AG & G \end{bmatrix}
\end{gather*}
Define the block diagonal weight matrices
\begin{equation*}
\mathcal{Q} \eqbyd \mathrm{diag}(Q, P_f) \qquad \mathcal{R} \eqbyd \mathrm{diag}(R, R)
\end{equation*}
The objective function is
\begin{align*}
V(x_0, \mathbf{u}, \mathbf{w}) &= (1/2)(x_0'Qx_0+\mathbf{x}'\mathcal{Q}\mathbf{x}+\mathbf{u}'\mathcal{R}\mathbf{u})
\end{align*}

Consider the stacked optimization $$\min_{\mathbf{u}}\max_{|\mathbf{w}|^2 \leq \alpha} \frac{V(x_0,\mathbf{u},\mathbf{w})}{\sum^{1}_{k=0} |w_k|^2}$$
We first show that the inequality constraint can be replaced by equality. Substituting the state dynamics into the cost yields
\[
V(x_0, \mathbf{u}, \mathbf{w}) = \frac{1}{2}\mathbf{w}'\mathcal{G}'\mathcal{Q}\mathcal{G}\mathbf{w} + \mathbf{w}'\mathcal{G}'\mathcal{Q}(\mathcal{A}x_0 + \mathcal{B}\mathbf{u}) + c(\mathbf{u})
\]
where $c(\mathbf{u})$ is independent of $\mathbf{w}$. For fixed $(x_0, \mathbf{u})$, the maximization over $\mathbf{w}$ is a convex quadratic plus a linear term. Since $\mathcal{Q} \succeq 0$, we have $\mathcal{G}'\mathcal{Q}\mathcal{G} \succeq 0$. Under \cref{asst3}, $G'P_fG \neq 0$, which through the definitions of $\mathcal{G}$ and $\mathcal{Q}$ ensures $\mathcal{G}'\mathcal{Q}\mathcal{G} \neq 0$. Assume for contradiction that an unconstrained maximum over $\mathbf{w}$ exists. This requires $\mathcal{G}'\mathcal{Q}\mathcal{G} \prec 0$. However, since $\mathcal{Q} \succeq 0$, we have $\mathcal{G}'\mathcal{Q}\mathcal{G} \succeq 0$, which implies $\mathcal{G}'\mathcal{Q}\mathcal{G} = 0$, contradicting \cref{asst3}. Therefore, the maximum over the constraint $|\mathbf{w}|^2 \leq \alpha$ occurs on the boundary $|\mathbf{w}|^2 = \alpha$, and we can equivalently consider
\[
\min_{\mathbf{u}}\max_{|\mathbf{w}|^2 = \alpha}\frac{V(x_0,\mathbf{u},\mathbf{w})}{\alpha}
\]

This problem has a different information structure from the sequential optimization \eqref{1dsignaldp}: in the stacked problem all components of $\mathbf{u}$ have full knowledge of all components of $\mathbf{w}$, whereas in the sequential problem each $u_k$ knows only $w_0,\ldots,w_{k-1}$ but not $w_k,\ldots,w_{N-1}$. Define the Lagrangian function
\[
L(x_0, \mathbf{u}, \mathbf{w}, \lambda) \eqbyd  V(x_0, \mathbf{u}, \mathbf{w}) - (\lambda/2) \left( \mathbf{w}'\mathbf{w} - \alpha \right)
\]
By applying Proposition 7 from Rawlings et al.\citep{rawlings:mannini:kuntz:2024,rawlings:mannini:kuntz:2024b} to the equality constrained stacked optimization we obtain
\[
\min_{\mathbf{u}}\max_{|\mathbf{w}|^2 = \alpha} \frac{V(x_0, \mathbf{u}, \mathbf{w})}{\alpha} = (1/\alpha)\min_{\mathbf{u}}\max_{\mathbf{w}} \min_{\lambda} L(x_0, \mathbf{u}, \mathbf{w}, \lambda)
\]
Using the stacked system representation, the Lagrangian becomes
\begin{align*}
L(x_0, \mathbf{u}, \mathbf{w}, \lambda) &= (1/2)(x_0'Qx_0+\mathbf{x}'\mathcal{Q}\mathbf{x}+\mathbf{u}'\mathcal{R}\mathbf{u}-\lambda\mathbf{w}'\mathbf{w}+\lambda\alpha)\\
&= (1/2) \begin{bmatrix} \mathbf{u} \\ \mathbf{w} \end{bmatrix}'
\begin{bmatrix} \mathcal{B}'\mathcal{Q}\mathcal{B}+\mathcal{R} & \mathcal{B}'\mathcal{Q}\mathcal{G} \\ (\mathcal{B}'\mathcal{Q}\mathcal{G})' & \mathcal{G}'\mathcal{Q}\mathcal{G}-\lambda I \end{bmatrix}
\begin{bmatrix} \mathbf{u} \\ \mathbf{w} \end{bmatrix} \\
&\quad + \begin{bmatrix} \mathbf{u} \\ \mathbf{w} \end{bmatrix}' \begin{bmatrix} \mathcal{B}'\mathcal{Q}\mathcal{A} \\ \mathcal{G}'\mathcal{Q}\mathcal{A} \end{bmatrix} x_0 \\
&\quad + (1/2)x_0'(Q+\mathcal{A}'\mathcal{Q}\mathcal{A})x_0 + \lambda\alpha/2
\end{align*}

For $\lambda \geq \norm{\mathcal{G}'\mathcal{Q}\mathcal{G}}$, we have $\mathcal{G}'\mathcal{Q}\mathcal{G}-\lambda I \preceq 0$, hence $L(x_0,\mathbf{u},\mathbf{w},\lambda)$ is concave in $\mathbf{w}$ for fixed $(x_0,\mathbf{u},\lambda)$. By Proposition 15 (strong duality for sphere constrained quadratic in $\mathbf{w}$) from Rawlings et al.\citep{rawlings:mannini:kuntz:2024,rawlings:mannini:kuntz:2024b}, for every fixed $(x_0, \mathbf{u})$, we have
\[
\max_{|\mathbf{w}|^2 = \alpha} V(x_0, \mathbf{u}, \mathbf{w}) = \min_{\lambda \geq \norm{\mathcal{G}'\mathcal{Q}\mathcal{G}}} \max_{\mathbf{w}} L(x_0, \mathbf{u}, \mathbf{w}, \lambda)
\]
Hence, we obtain
\begin{align*}
\min_{\mathbf{u}} \max_{|\mathbf{w}|^2 = \alpha} V(x_0, \mathbf{u}, \mathbf{w}) 
&= \min_{\mathbf{u}} \min_{\lambda \geq \norm{\mathcal{G}'\mathcal{Q}\mathcal{G}}} \max_{\mathbf{w}} L \\
&= \min_{\lambda \geq \norm{\mathcal{G}'\mathcal{Q}\mathcal{G}}} \min_{\mathbf{u}} \max_{\mathbf{w}} L
\end{align*}
where the last equality follows from interchanging the order of minimization.

From Proposition 12.a in Rawlings et al. \citep{rawlings:mannini:kuntz:2024,rawlings:mannini:kuntz:2024b}, since $\mathcal{B}'\mathcal{Q}\mathcal{B}+\mathcal{R} \succ 0$ (from $\mathcal{Q}\succeq 0$ and $\mathcal{R}\succ 0$) and $\mathcal{G}'\mathcal{Q}\mathcal{G}-\lambda I \preceq 0$ for $\lambda \geq \norm{\mathcal{G}'\mathcal{Q}\mathcal{G}}$, strong duality holds between the minimization over $\mathbf{u}$ and maximization over $\mathbf{w}$ in the stacked Lagrangian. Therefore $\min_{\mathbf{u}}\max_{\mathbf{w}} L = \max_{\mathbf{w}}\min_{\mathbf{u}} L$, and more generally, all orderings of the individual $\min_{u_k}$ and $\max_{w_k}$ operations yield the same value. In particular, for any $\lambda \geq \norm{\mathcal{G}'\mathcal{Q}\mathcal{G}}$, we have
\[
\min_{\mathbf{u}}\max_{\mathbf{w}} L = \min_{u_0}\max_{w_0} \min_{u_1}\max_{w_1} L
\]
Combining with the interchange of $\min_{\lambda}$ established above, we obtain
\begin{align*}
V^*(x_0) = (1/\alpha)\min_{\lambda \geq \norm{\mathcal{G}'\mathcal{Q}\mathcal{G}}} \min_{u_0}\max_{w_0}  &\min_{u_1}\max_{w_1} \\
&\quad L(x_0, u_0,w_0,u_1,w_1, \lambda)
\end{align*}
where
\begin{align*}
L(x_0, u_0,w_0,u_1,w_1, \lambda) &= \ell(x_0,u_0)+\ell(x_1,u_1) + \ell_f(x_2) \\
&\quad - (\lambda/2)(w'_0w_0+w'_1w_1-\alpha)
\end{align*}
and the minimization over $\lambda$ is in the outermost position.

The bound $\lambda \geq \norm{\mathcal{G}'\mathcal{Q}\mathcal{G}}$ from the stacked problem establishes the existence of a sufficiently large $\lambda$ for which strong duality holds, guaranteeing that all orderings of the individual $\min_{u_k}$ and $\max_{w_k}$ operations yield the same value. This existence result justifies placing $\min_{\lambda}$ in the outermost position. Having established this, we now solve the sequential dynamic programming, which exploits the causal information structure: at each stage $k$, the control $u_k$ is chosen with knowledge of only $w_0, \ldots, w_{k-1}$, not the future disturbances $w_k, \ldots, w_{N-1}$. This nested optimization admits stagewise feasibility conditions that are propagated backward to determine the feasibility bound $\lambda_1$, defining the recursive feasibility domain $[\lambda_1, \infty)$ for which the sequential minmax problem admits solutions at every stage.

\textbf{First step: from $k=2$ to $k=1$.} \newline
Since $\ell(x_0,u_0)$ is independent of $(u_1,w_1)$ once $(u_0,w_0)$ are fixed, we can rewrite
\begin{align*}
V^*(x_0) &= (1/ \alpha)\min_{\lambda}\min_{u_0}\max_{w_0}\Big[\ell(x_0,u_0) -(\lambda/2)(w'_0w_0-\alpha)\\
&\quad +  \min_{u_1}\max_{w_1} \big(\ell(x_1,u_1) + \ell_f(x_2) - (\lambda/2)(w'_1w_1)\big) \Big]
\end{align*}
The term $\ell(x_1,u_1) + \ell_f(x_2) - (\lambda/2)(w'_1w_1)$ is equivalent to
\[
(1/2)\;  \begin{bmatrix} u_1 \\ w_1 \end{bmatrix}'
M_1(\lambda) \begin{bmatrix} u_1 \\ w_1 \end{bmatrix}  +
 \begin{bmatrix} u_1 \\ w_1 \end{bmatrix}' d_1 + (1/2) \; x_1'(Q+A'P_fA)x_1 
\]
Define $\lambda_2\eqbyd\norm{G'P_fG} \neq 0$. Applying Proposition 14.a from Rawlings et al.\citep{rawlings:mannini:kuntz:2024,rawlings:mannini:kuntz:2024b} to $\min_{u_1}\max_{w_1} [\ell(x_1,u_1) + \ell_f(x_2) - (\lambda/2)(w'_1w_1)]$ yields
\begin{align*}
(1/ \alpha)\min_{u_1}\max_{w_1} &[\ell(x_1,u_1) + \ell_f(x_2) - (\lambda/2)(w'_1w_1)] \\
&= (1/2\alpha)x_1' \Pi_1(\lambda)x_1
\end{align*}
where
\begin{align*}
\Pi_1(\lambda) &=Q+A'P_fA \\
&\quad -A' P_f \begin{bmatrix} B & G \end{bmatrix}
     \begin{bmatrix} B'P_fB + R & B'P_f G \\ (B'P_fG)' & G'P_fG - \lambda I  \end{bmatrix}^{\dagger}
    \begin{bmatrix} B' \\ G'\end{bmatrix} P_fA
\end{align*}
which from \cref{receq} can be rewritten as
$$\Pi_{1}(\lambda) = \bar{Q}_{1} + \bar{A}_{1}'P_f\bar{A}_{1} - \bar{A}_{1} P_f G(G'P_f G-\lambda I)^{\dagger}G'P_f \bar{A}_{1}$$
where $\bar{A}_{1} = A+BK_{1}$ and $\bar{Q}_{1} = Q + K_{1}'RK_{1}$ and $K_{1}$ satisfies
    \begin{equation*}
  \begin{bmatrix}
        B'P_f B+R & B'P_f G \\
        G'P_f B & G'P_f G-\lambda I
    \end{bmatrix}\begin{bmatrix}
K_{1} \\ J_{1}
\end{bmatrix} =\begin{bmatrix}
B'P_f A \\ G'P_f A
\end{bmatrix} 
\end{equation*} 
From $Q\succeq0$ and $R\succ0$, we have $\bar{Q}_{1}\succeq0$. From $G'P_f G-\lambda I \preceq0$, we have $\bar{A}_{1} P_f G(G'P_f G-\lambda I)^{\dagger}G'P_f \bar{A}_{1} \preceq 0$. From $\bar{Q}_{1}\succeq0$, $G'P_f G-\lambda I \preceq0$, $\bar{A}_{1} P_f G(G'P_f G-\lambda I)^{\dagger}G'P_f \bar{A}_{1} \preceq 0$, and $P_f \succeq 0$, we have $\Pi_{1}(\lambda) \succeq0$ for $\lambda \geq \lambda_2$.

From Proposition 14.a from Rawlings et al.\citep{rawlings:mannini:kuntz:2024,rawlings:mannini:kuntz:2024b}, solutions to $\min_{u_1}\max_{w_1} [\ell(x_1,u_1) + \ell_f(x_2) - (\lambda/2)(w'_1w_1)]$ exist for $\lambda = \lambda_2$ for $d_1 \in \mathcal{R}(M_1(\lambda_2))$ and for $\lambda > \lambda_2$ for all $d_1 \in \bbR^{m+q}$. From \cref{prop:range-inclusion-invertible}, $G'P_f G-\lambda I \preceq0$, and Assumptions 2-3, $M_1(\lambda)$ is invertible for $\lambda \geq\lambda_2$. Thus, $d_1 \in \mathcal{R}(M_1(\lambda_2))$ is always satisfied, and solutions exist for $\lambda \geq \lambda_2$ for all $d_1\in \bbR^{m+q}$. 

Define $\phi_1(\lambda,x_1)\eqbyd \min_{u_1}\max_{w_1}[\ell(x_1,u_1) + \ell_f(x_2) - (\lambda/2)(w'_1w_1)]$. We establish joint convexity in $(\lambda,x_1)$. For fixed $w_1$, the function $(u_1,\lambda,x_1) \mapsto \ell(x_1,u_1) + \ell_f(Ax_1+Bu_1+Gw_1) - (\lambda/2)w_1'w_1$ is convex by composition of convex functions with affine mappings \citep[§3.2.4]{boyd:vandenberghe:2004}. For fixed $(u_1,\lambda,x_1)$, the map $w_1 \mapsto \ell(x_1,u_1) + \ell_f(Ax_1+Bu_1+Gw_1) - (\lambda/2)w_1'w_1$ is concave for $\lambda \geq \lambda_2$ since $G'P_fG - \lambda I \preceq 0$. Therefore $g_1(u_1,\lambda,x_1)\eqbyd\max_{w_1}[\ell(x_1,u_1) + \ell_f(Ax_1+Bu_1+Gw_1) - (\lambda/2)w_1'w_1]$ is the pointwise supremum of convex functions in $(u_1,\lambda,x_1)$, hence convex by \citep[§3.2.3]{boyd:vandenberghe:2004} (see also \citet[Theorem 5.5]{rockafellar:1970}). The partial minimization $\phi_1(\lambda,x_1) = \min_{u_1} g_1(u_1,\lambda,x_1)$ preserves joint convexity in $(\lambda,x_1)$ by \citep[§3.2.5]{boyd:vandenberghe:2004} (see also \citet[Theorem 5.3]{rockafellar:1970}).

\textbf{Second step: from $k=1$ to $k=0$.} \newline
Proceeding to the next stage, we have
\begin{align*}
V^*(x_0) &= (1/ \alpha)\min_{\lambda} \bigg[ \lambda\alpha/2 + \min_{u_0}\max_{w_0}  [\ell(x_0,u_0)+ x_1' \Pi_1(\lambda)x_1 \\
&\quad - (\lambda/2)(w'_0w_0)]  \bigg]
\end{align*}
The term $\ell(x_0,u_0)+ x_1' \Pi_1(\lambda)x_1 - (\lambda/2)(w'_0w_0)$ is equivalent to
\begin{align*}
&(1/2)\;  \begin{bmatrix} u_0 \\ w_0 \end{bmatrix}'
M_0(\lambda) \begin{bmatrix} u_0 \\ w_0 \end{bmatrix}  +
 \begin{bmatrix} u_0 \\ w_0 \end{bmatrix}' d_0 \\
&\quad + (1/2) \; x_0'(Q+A'\Pi_1(\lambda)A)x_0
\end{align*}
Applying Proposition 14.a from Rawlings et al.\citep{rawlings:mannini:kuntz:2024,rawlings:mannini:kuntz:2024b} to $\min_{u_0}\max_{w_0}  [\ell(x_0,u_0)+ x_1' \Pi_1(\lambda)x_1 - (\lambda/2)(w'_0w_0)]$ for $\lambda\geq\lambda_2$ yields
\begin{align*}
(1/ \alpha)\min_{u_0}\max_{w_0}  &[\ell(x_0,u_0)+ x_1' \Pi_1(\lambda)x_1 \\
&\quad - (\lambda/2)(w'_0w_0)] = (1/2\alpha)x_0'\Pi_0(\lambda)x_0
\end{align*}
 where
\begin{align*}
\Pi_0(\lambda) &= Q+A'\Pi_1(\lambda)A -A' \Pi_1(\lambda) \begin{bmatrix} B & G \end{bmatrix} \\
&\quad \begin{bmatrix} B'\Pi_1(\lambda)B + R & B'\Pi_1(\lambda) G \\ (B'\Pi_1(\lambda)G)' & G'\Pi_1(\lambda)G - \lambda I  \end{bmatrix}^{\dagger}
    \begin{bmatrix} B' \\ G'\end{bmatrix} \Pi_1(\lambda)A
\end{align*}
We analyze for which conditions the solutions to $$\min_{u_0}\max_{w_0}  [\ell(x_0,u_0)+ x_1' \Pi_1(\lambda)x_1 - (\lambda/2)(w'_0w_0)]$$ exist. By doing so, we defer the Lagrange multiplier $\lambda$, an optimization variable, from the first step to the second step, and eventually to an outer scalar optimization.

See \cref{fig:defer} for visualizing the following argument. From Proposition 14.a from Rawlings et al.\citep{rawlings:mannini:kuntz:2024,rawlings:mannini:kuntz:2024b}, solutions to $\min_{u_0}\max_{w_0}  [\ell(x_0,u_0)+ x_1' \Pi_1(\lambda)x_1 - (\lambda/2)(w'_0w_0)]$ exist if $\lambda \geq \norm{G'\Pi_1(\lambda)G}$. We guarantee that the inequality $\lambda \geq \norm{G'\Pi_1(\lambda)G}$ holds by constructing $\lambda_1\ge\lambda_2$ such that, for all $\lambda\ge\lambda_1$, the admissibility condition $\lambda \geq \norm{G'\Pi_1(\lambda)G}$ holds. Hence the inner minmax problem admits a saddle point for every $\lambda\ge\lambda_1$, and the choice of $\lambda$ can be deferred to the outer (stage $k = 0$) scalar optimization.

For $\lambda\geq\lambda_2$ define
\[
m(\lambda)\eqbyd\norm{G'\Pi_1(\lambda)G}
\]
Since $M_1(\lambda)$ is invertible on $[\lambda_2,\infty)$ and all operators used to build $\Pi_1(\lambda)$ e.g., inverse of a matrix, are continuous there, $\Pi_1(\lambda)$ and $m(\lambda)$ are continuous on $[\lambda_2,\infty)$.
To prove that $m(\lambda)$ is nonincreasing, fix $\lambda_+\ge\lambda_-\ge\lambda_2$.  
Define 
\begin{align*}
q(u,w) &\eqbyd x'Qx+u'Ru+(Ax\!+\!Bu\!+\!Gw)'P_f(Ax\!+\!Bu\!+\!Gw)\\
&\quad -(\lambda_+/2)\,w'w\\
r(u,w) &\eqbyd x'Qx+u'Ru+(Ax\!+\!Bu\!+\!Gw)'P_f(Ax\!+\!Bu\!+\!Gw)\\
&\quad -(\lambda_-/2)\,w'w
\end{align*}
Since $\lambda_+\ge\lambda_-$, we have $q(u,w)\le r(u,w)$ for all $x,u,w$, hence
\[
\min_u\max_w q(u,w)\ \le\ \min_u\max_w r(u,w)
\]
Applying Proposition 14.a from Rawlings et al.\citep{rawlings:mannini:kuntz:2024,rawlings:mannini:kuntz:2024b} to both sides yields
$x'\Pi_1(\lambda_+)x \le x'\Pi_1(\lambda_-)x$ for all $x$, i.e., $\Pi_1(\lambda_+)\preceq\Pi_1(\lambda_-)$.
Therefore $m(\lambda)$ is continuous and nonincreasing on $[\lambda_2,\infty)$.

\begin{figure}
\centering
\includegraphics[width=1.02\linewidth]{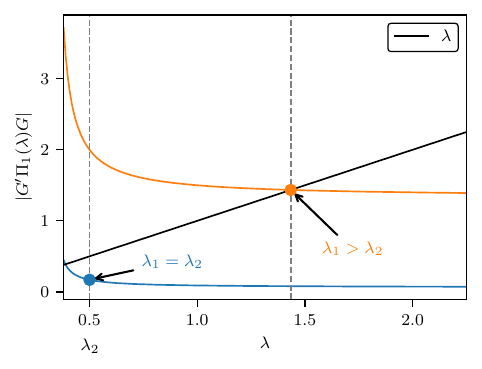}
\caption{Graphical construction of $\lambda_1$ via the fixed point of $\norm{G'\Pi_1(\lambda)G}$ (curves) against the identity $y=\lambda$ (solid black).
Dashed vertical lines mark $\lambda_2\eqbyd\norm{G'P_fG}=0.5$ and the resulting $\lambda_1$.
\emph{Blue:} $\lambda_2\ge \norm{G'\Pi_1(\lambda_2)G}$, so the admissibility condition $\lambda\ge \norm{G'\Pi_1(\lambda)G}$ already holds for all $\lambda\ge\lambda_2$ and we set $\lambda_1=\lambda_2$.
\emph{Orange:} $\lambda_2<\norm{G'\Pi_1(\lambda_2)G}$, so there is a unique $\lambda_1>\lambda_2$ with $\lambda_1=\norm{G'\Pi_1(\lambda_1)G}$; for all $\lambda\ge\lambda_1$ the condition $\lambda\ge \norm{G'\Pi_1(\lambda)G}$ holds.}
\label{fig:defer}
\end{figure}

\noindent
We thus distinguish two cases, illustrated in \cref{fig:defer}:
\begin{itemize}
\item If $\lambda_2\ge m(\lambda_2)$, then $\lambda\ge m(\lambda)$ for all $\lambda\ge\lambda_2$. Set $\lambda_1\eqbyd\lambda_2$.
\item If $\lambda_2<m(\lambda_2)$, define $n(\lambda)\eqbyd\lambda-m(\lambda)$. Then $n(\lambda)$ is continuous and strictly increasing on $[\lambda_2,\infty)$, with
$n(\lambda_2)<0$ and $\lim_{\lambda\to\infty}n(\lambda)=+\infty$.
By the intermediate value theorem there exists a unique $\lambda_1>\lambda_2$ such that $n(\lambda_1)=0$, i.e., $\lambda_1=m(\lambda_1)$.
\end{itemize}
In both cases, by construction $\lambda\ge m(\lambda)$ for all $\lambda\ge\lambda_1$. Combining the two cases, we define
\begin{equation*}
\lambda_1 \eqbyd \begin{cases}
\min_{\lambda\ge \lambda_2} \{\lambda: \lambda = \norm{G'\Pi_1(\lambda)G}\} & \text{if } \lambda_2 < \norm{G'\Pi_1(\lambda_2)G} \\
\lambda_2 & \text{if } \lambda_2 \geq \norm{G'\Pi_1(\lambda_2)G}
\end{cases}
\end{equation*}
Note that $\lambda_1\geq\lambda_2$.

Now that we have guaranteed the inequality $\lambda \geq \norm{G'\Pi_1(\lambda)G}$ holds for all $\lambda\geq\lambda_1$, from Proposition 14.a from Rawlings et al.\citep{rawlings:mannini:kuntz:2024,rawlings:mannini:kuntz:2024b} solutions to $\min_{u_0}\max_{w_0}  [\ell(x_0,u_0)+ x_1' \Pi_1(\lambda)x_1 - (\lambda/2)(w'_0w_0)]$ exist for $\lambda = \lambda_1$ for $d_0 \in \mathcal{R}(M_0(\lambda_1))$ and for $\lambda > \lambda_1$ for all $d_0 \in \bbR^{m+q}$. From \cref{prop:range-inclusion-invertible}, $G'\Pi_1(\lambda) G-\lambda I \preceq0$, and Assumptions 2-3, $M_0(\lambda)$ is invertible for $\lambda \geq\lambda_1$. Thus, $d_0 \in \mathcal{R}(M_0(\lambda_1))$ is always satisfied, and solutions exist for $\lambda \geq \lambda_1$ for all $d_0\in \bbR^{m+q}$. Furthermore, $\Pi_0(\lambda)$ is obtained from continuous operators, since $M_0(\lambda)$ is invertible, and well-defined for $\lambda \geq \lambda_1$, thus $\Pi_0(\lambda)$ is continuous for $\lambda \geq \lambda_1$. From the same arguments that proved $\Pi_1(\lambda)\succeq0$ for $\lambda\geq \lambda_2$, we have $\Pi_0(\lambda)\succeq0$ for $\lambda\geq \lambda_1$.

Define $\phi_0(\lambda,x_0)\eqbyd \min_{u_0}\max_{w_0}[\ell(x_0,u_0)+ (Ax_0+Bu_0+Gw_0)' \Pi_1(\lambda)(Ax_0+Bu_0+Gw_0) - (\lambda/2)w_0'w_0]$. We establish joint convexity in $(\lambda,x_0)$. For fixed $w_0$, the function $(u_0,\lambda,x_0) \mapsto \ell(x_0,u_0) + (Ax_0+Bu_0+Gw_0)'\Pi_1(\lambda)(Ax_0+Bu_0+Gw_0) - (\lambda/2)w_0'w_0$ is convex since $\phi_1(\lambda,x_1)$ is jointly convex in $(\lambda,x_1)$ and the composition with affine mapping $x_1=Ax_0+Bu_0+Gw_0$ preserves convexity \citep[§3.2.4]{boyd:vandenberghe:2004}. For fixed $(u_0,\lambda,x_0)$, the map $w_0 \mapsto \ell(x_0,u_0) + (Ax_0+Bu_0+Gw_0)'\Pi_1(\lambda)(Ax_0+Bu_0+Gw_0) - (\lambda/2)w_0'w_0$ is concave for $\lambda \geq \lambda_1$ since $G'\Pi_1(\lambda)G - \lambda I \preceq 0$. Therefore $g_0(u_0,\lambda,x_0)\eqbyd\max_{w_0}[\ell(x_0,u_0) + (Ax_0+Bu_0+Gw_0)'\Pi_1(\lambda)(Ax_0+Bu_0+Gw_0) - (\lambda/2)w_0'w_0]$ is the pointwise supremum of convex functions in $(u_0,\lambda,x_0)$, hence convex by \citep[§3.2.3]{boyd:vandenberghe:2004} (see also \citet[Theorem 5.5]{rockafellar:1970}). The partial minimization $\phi_0(\lambda,x_0) = \min_{u_0} g_0(u_0,\lambda,x_0)$ preserves joint convexity in $(\lambda,x_0)$ by \citep[§3.2.5]{boyd:vandenberghe:2004} (see also \citet[Theorem 5.3]{rockafellar:1970}).

\textbf{Third step: optimization over $\lambda$ at stage $k=0$.} \newline
Finally, the outer minimization over $\lambda\in [\lambda_1, \infty)$ yields
\[
 \min_{\lambda \in [\lambda_1, \infty)}
     \; \frac12\!\left(\frac{x_0}{\sqrt{\alpha}}\right)'\!
           \Pi_0(\lambda)\!\left(\frac{x_0}{\sqrt{\alpha}}\right)
     +\frac{\lambda}{2}
\]
The function 
$$L(\lambda) \eqbyd \frac12\left(\frac{x_0}{\sqrt{\alpha}}\right)'\Pi_0(\lambda)\left(\frac{x_0}{\sqrt{\alpha}}\right) + \frac{\lambda}{2}$$ 
is continuous on $[\lambda_1, \infty)$ because $\Pi_0(\lambda)$ is continuous for $\lambda \geq \lambda_1$. Moreover, $L(\lambda)$ is convex on $[\lambda_1,\infty)$ since $\phi_0(\lambda,x_0)$ is jointly convex in $(\lambda,x_0)$ as established in the second step. The function $L(\lambda)$ is coercive as $\lambda \to \infty$ since $\Pi_0(\lambda) \succeq 0$ implies $L(\lambda) \geq \lambda/2 \to \infty$. Therefore, by the Weierstrass theorem, a minimum exists with $\lambda^* \geq \lambda_1$.

\textbf{Completing the proof.} \newline
We finally prove items 1-4 from the proposition statement. Given the solution $\lambda^* \geq \lambda_1$ we have
\begin{enumerate}
    \item From the second and first step in the backward recursion and Proposition 14.a from Rawlings et al.\citep{rawlings:mannini:kuntz:2024,rawlings:mannini:kuntz:2024b} we have that the optimal solutions $u^*_0(x_0;\lambda^*)$ and $u^*_1(x_1;\lambda^*)$ satisfy \eqref{2dsc-a} and \eqref{2dsc-b}.
    \item From the second and first step in the backward recursion and Proposition 14.a from Rawlings et al.\citep{rawlings:mannini:kuntz:2024,rawlings:mannini:kuntz:2024b} we have that the solutions $\ow_0$ and $\ow_1$ satisfy \eqref{2dwcond-a} and \eqref{2dwcond-b}. Furthermore, the optimal solution $w^*_0(x_0;\lambda^*)$ and $w^*_1(x_1;\lambda^*)$ are jointly constrained within the set $\bbW$. Thus the optimal solutions satisfy $w^*_0(x_0;\lambda^*) = \overline{w}_0 \cap \bbW$ and $w^*_1(x_1;\lambda^*) = \overline{w}_1 \cap \bbW$.
    \item Given
\begin{align*}
L^*(\lambda^*) &= V^*(x_0, \useq^*, \wseq^*) - (\lambda^*/2) \left( (\wseq^*)'(\wseq^*) - \alpha \right) \\
&= (1/2)(\frac{x_0}{\sqrt{\alpha}})'\Pi_0(\lambda^*)(\frac{x_0}{\sqrt{\alpha}}) + \lambda^* /2
\end{align*}
and since $\wseq^*$ satisfies $(\wseq^*)'(\wseq^*) = \alpha$ from the constraint $\wseq \in \bbW$, we obtain
\[
V^*(x_0) = \min_{u_0}\max_{w_0} \; \min_{u_1}\max_{w_1} \;
 \frac{V(x_0, \useq, \wseq)}{\alpha} = L^*(\lambda^*)
\]
which is \eqref{2doc}.

\item
We now prove that for $\lambda\geq\lambda_1$, $\Pi_{0}(\lambda) \succeq \Pi_{1}(\lambda)\succeq P_f$, and $\Pi_0(\lambda)$, $\Pi_1(\lambda)$ are monotonic nonincreasing in $\lambda$. 

Fix $\lambda \geq \lambda_1$. First, from $\Pi_{1}(\lambda) = \bar{Q}_{1} + \bar{A}_{1}'P_f\bar{A}_{1} - \bar{A}_{1} P_f G(G'P_f G-\lambda I)^{\dagger}G'P_f \bar{A}_{1}$, $P_f \succeq 0$, and $\bar{A}_{1} P_f G(G'P_f G-\lambda I)^{\dagger}G'P_f \bar{A}_{1}\preceq0$, we have $\Pi_1(\lambda)\succeq P_f$. 

From $\Pi_{1}(\lambda) \succeq P_{f}$, we have $x'\Pi_{1}(\lambda) x \geq x'P_f x$ for all $x\in\bbR^n$. Define 
\begin{align*}
g(u,w) &= x'Qx + u'Ru \\
&\quad + (Ax+Bu+Gw)'\Pi_{1}(\lambda)(Ax+Bu+Gw) \\
&\quad - (\lambda/2)w'w\\
f(u,w) &= x'Qx + u'Ru \\
&\quad + (Ax+Bu+Gw)'P_f(Ax+Bu+Gw) \\
&\quad - (\lambda/2)w'w
\end{align*}
Since $\Pi_{1}(\lambda) \succeq P_f$, we have $g(u,w) \geq f(u,w)$ for all $x,u,w$, then $\min_u \max_w g(u,w) \geq \min_u \max_w f(u,w)$. Applying Proposition 14.a from Rawlings et al.\citep{rawlings:mannini:kuntz:2024,rawlings:mannini:kuntz:2024b} to both sides yields $x'\Pi_{0}(\lambda) x \geq x'\Pi_{1}(\lambda) x$ for all $x$, i.e., $\Pi_{0}(\lambda) \succeq \Pi_{1}(\lambda) \succeq P_f$.

Now consider $\lambda_{+} \geq \lambda_{-} \geq \lambda_1$. We previously proved $\Pi_{1}(\lambda_{+}) \preceq \Pi_{1}(\lambda_{-})$ for $\lambda \geq \lambda_2$. Because $\lambda_1 \geq \lambda_2$, then $\Pi_{1}(\lambda_{+}) \preceq \Pi_{1}(\lambda_{-})$ for $\lambda \geq \lambda_1$. Now we prove $\Pi_{0}(\lambda_{+}) \preceq \Pi_{0}(\lambda_{-})$ for $\lambda \geq \lambda_1$. Define
\begin{align*}
s(u,w) &= x'Qx + u'Ru \\
&\quad + (Ax+Bu+Gw)'\Pi_1(\lambda_{+})(Ax+Bu+Gw) \\
&\quad - (\lambda_{+}/2)w'w\\
t(u,w) &= x'Qx + u'Ru \\
&\quad + (Ax+Bu+Gw)'\Pi_1(\lambda_{-})(Ax+Bu+Gw) \\
&\quad - (\lambda_{-}/2)w'w
\end{align*}
Since $\lambda_{+} \geq \lambda_{-}$ and $\Pi_{1}(\lambda_{+}) \preceq \Pi_{1}(\lambda_{-})$, we have $s(u,w) \leq t(u,w)$ for all $x,u,w$, then $\min_u \max_w s(u,w) \leq \min_u \max_w t(u,w)$. Applying Proposition 14.a from Rawlings et al.\citep{rawlings:mannini:kuntz:2024,rawlings:mannini:kuntz:2024b} to both sides yields $x'\Pi_0(\lambda_+)x \le x'\Pi_0(\lambda_-)x$ for all $x$, i.e., $\Pi_0(\lambda_+)\preceq\Pi_0(\lambda_-)$.
\end{enumerate}
\end{proof}

\begin{proof}[Proof of \cref{prop:ndsignal}]
The proof follows by induction from \cref{prop:1dsignal}.

Define
\begin{align*}
M_k(\lambda) &\eqbyd \begin{bmatrix}
       B'\Pi_{k+1}B+R & B'\Pi_{k+1}G \\
       G'\Pi_{k+1}B & G'\Pi_{k+1}G-\lambda I
   \end{bmatrix} \\
d_k &\eqbyd \begin{bmatrix}
B'\Pi_{k+1}A \\ G'\Pi_{k+1}A
\end{bmatrix}x_k
\end{align*}
By arguments analogous to the stacked problem in \cref{prop:1dsignal}, the inequality constraint $|\mathbf{w}|^2 \leq \alpha$ can be replaced by the equality constraint $|\mathbf{w}|^2 = \alpha$, since \cref{asst3} ensures the maximum occurs on the boundary. Applying Proposition 7 from Rawlings et al.\citep{rawlings:mannini:kuntz:2024,rawlings:mannini:kuntz:2024b} to introduce the Lagrangian, \cref{th:minimax} to interchange $\max_{\wseq}$ and $\min_{\lambda}$ for $\lambda \geq \norm{\mathcal{G}'\mathcal{Q}\mathcal{G}}$, and Proposition 12.a in Rawlings et al. \citep{rawlings:mannini:kuntz:2024,rawlings:mannini:kuntz:2024b} to establish strong duality between $\min_{\useq}$ and $\max_{\wseq}$, we obtain
\begin{equation}
\begin{split}
V^*(x_0) \eqbyd  (1/\alpha) \min_{\lambda \geq \norm{\mathcal{G}'\mathcal{Q}\mathcal{G}}} \min_{u_0}\max_{w_0} &\min_{u_1}\max_{w_1} \\
& \cdots \min_{u_{N-1}}\max_{w_{N-1}} 
 L(x_0,\useq,\wseq,\lambda)
\end{split}
\label{signaldpL}
\end{equation}
where
\[
L(x_0,\useq,\wseq,\lambda) = \sum_{k=0}^{N-1} \ell(x_k,u_k) + \ell_f(x_N) - (\lambda/2)(\wseq'\wseq - \alpha)
\]
We apply backward dynamic programming to \eqref{signaldpL}, solving each minmax subproblem at stage $k$ using Proposition 14.a from Rawlings et al.\citep{rawlings:mannini:kuntz:2024,rawlings:mannini:kuntz:2024b} and determining the feasibility bound $\lambda_k$ at each stage as in \cref{prop:1dsignal}. At each stage $k$ we obtain
\begin{align*}
(1/\alpha)\min_{u_k}\max_{w_k} &[\ell(x_k,u_k) + x_{k+1}' \Pi_{k+1}(\lambda)x_{k+1} \\
&\quad - (\lambda/2)(w'_kw_k)] = (1/2\alpha)x_k'\Pi_k(\lambda)x_k
\end{align*}
and solutions exist for $\lambda \geq \lambda_k$ for all $d_k \in \mathbb{R}^{m+q}$. Moreover, by \cref{prop:range-inclusion-invertible}, $G'\Pi_{k+1}(\lambda)G - \lambda I \preceq0$, and Assumptions 2-3, the block matrix $M_k(\lambda)$ is invertible for every $\lambda\ge \lambda_k$, so the inverse in \eqref{signalrec1} is well-defined and $\Pi_k(\lambda)$ is continuous on $[\lambda_k,\infty)$. By induction, we obtain \eqref{ndsc2}, \eqref{ndwcond2}, \eqref{ndoc2}, the recursion \eqref{signalrec1} for $k \in [0,1,\dots,N-1]$ with terminal condition $\Pi_N = P_f$, and $\lambda_k \geq \lambda_{k+1}$. 

Define $\phi_k(\lambda,x_k)\eqbyd \min_{u_k}\max_{w_k}[\ell(x_k,u_k) + (Ax_k+Bu_k+Gw_k)' \Pi_{k+1}(\lambda)(Ax_k+Bu_k+Gw_k) - (\lambda/2)w_k'w_k]$. By arguments identical to those in \cref{prop:1dsignal}, $\phi_k(\lambda,x_k)$ is jointly convex in $(\lambda,x_k)$ for $\lambda \geq \lambda_k$.

The remaining optimization is
\[
\min_{\lambda \in [\lambda_1, \infty)}
    \; \frac12\!\left(\frac{x_0}{\sqrt{\alpha}}\right)'\!
          \Pi_0(\lambda)\!\left(\frac{x_0}{\sqrt{\alpha}}\right)
    +\frac{\lambda}{2}
\]
where $$L(\lambda)\eqbyd  \frac12\!\left(\frac{x_0}{\sqrt{\alpha}}\right)'\!
          \Pi_0(\lambda)\!\left(\frac{x_0}{\sqrt{\alpha}}\right)
    +\frac{\lambda}{2}$$
The function $L(\lambda)$ is continuous on $[\lambda_1, \infty)$ because $\Pi_0(\lambda)$ is continuous for $\lambda \geq \lambda_1$. Moreover, $L(\lambda)$ is convex on $[\lambda_1,\infty)$ since $\phi_0(\lambda,x_0)$ is jointly convex in $(\lambda,x_0)$. The function $L(\lambda)$ is coercive as $\lambda \to \infty$ since $\Pi_0(\lambda) \succeq 0$ implies $L(\lambda) \geq \lambda/2 \to \infty$. Therefore, by the Weierstrass theorem, a minimum exists with $\lambda^* \geq \lambda_1$ for all $d_k \in \bbR^{m+q}$.

Monotonicity properties follow by induction from extending the monotonicity arguments in \cref{prop:1dsignal} to finite horizon $N$.
\end{proof}

\begin{proof}[Proof of \cref{prop:derivative}]
We structure the proof in three parts: (i) establishing the equivalence of stacked and recursive problems, (ii) deriving the derivative formula, and (iii) determining the optimum.

\textbf{Equivalence via analytic continuation.} \newline
Consider the stacked Lagrangian optimization
\[
\tilde{L}(\lambda) \eqbyd (1/\alpha) \min_{\mathbf{u}} \max_{\mathbf{w}} \left[ V(x_0, \mathbf{u}, \mathbf{w}) - (\lambda/2)(|\mathbf{w}|^2 - \alpha) \right]
\]
Define the stacked matrix
\begin{equation}
\mathcal{M}(\lambda) \eqbyd \begin{bmatrix} \mathcal{B}'\mathcal{Q}\mathcal{B} + \mathcal{R} & \mathcal{B}'\mathcal{Q}\mathcal{G} \\ (\mathcal{B}'\mathcal{Q}\mathcal{G})' & \mathcal{G}'\mathcal{Q}\mathcal{G} - \lambda I \end{bmatrix}
\label{eq:calM-def}
\end{equation}
From Proposition 14.a from Rawlings et al.\citep{rawlings:mannini:kuntz:2024,rawlings:mannini:kuntz:2024b}, for $\lambda > \norm{\mathcal{G}'\mathcal{Q}\mathcal{G}}$ the saddle point of the stacked Lagrangian exists and the optimal value is
\[
\tilde{L}(\lambda) = \frac{1}{2\alpha} x_0' \Psi(\lambda) x_0 + \frac{\lambda}{2}
\]
where
\[
\Psi(\lambda) \eqbyd Q + \mathcal{A}'\mathcal{Q}\mathcal{A} - \mathcal{A}'\mathcal{Q} \begin{bmatrix} \mathcal{B} & \mathcal{G} \end{bmatrix} \mathcal{M}(\lambda)^{-1} \begin{bmatrix} \mathcal{B}' \\ \mathcal{G}' \end{bmatrix} \mathcal{Q}\mathcal{A}
\]
Similarly, consider the recursive problem
\begin{align*}
L(\lambda) \eqbyd (1/\alpha) \min_{u_0}\max_{w_0} &\cdots \min_{u_{N-1}}\max_{w_{N-1}} \\
&\quad \left[ V(x_0, \mathbf{u}, \mathbf{w}) - (\lambda/2)(|\mathbf{w}|^2 - \alpha) \right]
\end{align*}
From the proof of \cref{prop:ndsignal}, for $\lambda \geq \norm{\mathcal{G}'\mathcal{Q}\mathcal{G}}$, strong duality holds between $\min_{\mathbf{u}}$ and $\max_{\mathbf{w}}$ in the stacked problem, enabling equivalence with the recursive problem. Therefore, on the open set $\mathcal{S}^\circ \eqbyd (\norm{\mathcal{G}'\mathcal{Q}\mathcal{G}}, \infty)$, both problems yield the same optimal value: $L(\lambda) = \tilde{L}(\lambda)$ for all $\lambda \in \mathcal{S}^\circ$.

Define the recursive domain $\mathcal{D} \eqbyd [\lambda_1, \infty)$ where $\lambda_1$ is the feasibility bound from \cref{prop:ndsignal}. From \cref{prop:range-inclusion-invertible} and Assumptions 2-3, the matrices $M_k(\lambda)$ are invertible for all $\lambda \in \mathcal{D}$. The intersection $\mathcal{S}^\circ \cap \mathcal{D}$ is nonempty (it contains all sufficiently large $\lambda$), and on this set both problems are well-defined and yield the same optimal value $L(\lambda) = \tilde{L}(\lambda)$. We now establish that $\Pi_k(\lambda)$, $M_k(\lambda)^{-1}$, and $L(\lambda)$ are rational functions of $\lambda$. A scalar function $r(\lambda) = p(\lambda)/q(\lambda)$ with polynomials $p, q$ (and $q \neq 0$) is rational; a matrix-valued function is rational if each entry is a rational scalar function. Proceeding by induction on $k$:
\begin{itemize}
\item Base case: $\Pi_N(\lambda) = P_f$ is constant, hence polynomial, hence rational.
\item Inductive step: assume $\Pi_{k+1}(\lambda)$ has rational entries. The blocks $B'\Pi_{k+1}(\lambda)B + R$, $B'\Pi_{k+1}(\lambda)G$, $G'\Pi_{k+1}(\lambda)B$, and $G'\Pi_{k+1}(\lambda)G$ are sums and products of matrices with rational entries, hence have rational entries. Since $\lambda$ is a polynomial in $\lambda$, the $(2,2)$ block $G'\Pi_{k+1}(\lambda)G - \lambda I$ has rational entries. Therefore $M_k(\lambda)$ has rational entries. By the adjugate formula, $M_k(\lambda)^{-1} = \mathrm{adj}(M_k(\lambda))/\det M_k(\lambda)$; since $\mathrm{adj}(M_k(\lambda))$ and $\det M_k(\lambda)$ are polynomial expressions in the entries of $M_k(\lambda)$, both are rational in $\lambda$, hence $M_k(\lambda)^{-1}$ has rational entries. The Riccati recursion \eqref{signalrec1} involves only sums and products of matrices with rational entries, so $\Pi_k(\lambda)$ has rational entries.
\end{itemize}
By induction, $\Pi_0(\lambda)$ and $\tilde{J}(\lambda)$ have rational entries in $\lambda$. Since $L(\lambda)$ is a quadratic form in $x_0$ with coefficients from $\Pi_0(\lambda)$ plus the linear term $\lambda/2$, $L(\lambda)$ is rational in $\lambda$. Since $M_k(\lambda)$ is invertible for all $\lambda \in \mathcal{D}$, these rational functions have no poles on $\mathcal{D}$, hence are real-analytic on $(\lambda_1, \infty)$.

Similarly, $\mathcal{M}(\lambda) = \mathcal{M}_0 - \lambda \mathcal{I}$ where $\mathcal{M}_0$ is independent of $\lambda$ and $\mathcal{I} \eqbyd \mathrm{diag}(0, I)$, so $\Psi(\lambda)$ and $\tilde{L}(\lambda)$ are rational functions of $\lambda$, real-analytic on $\mathcal{S}^\circ$.

Since $L(\lambda) = \tilde{L}(\lambda)$ on the nonempty open set $\mathcal{S}^\circ \cap \mathcal{D}$, and both are real-analytic (being rational functions with no poles on their respective domains), by the identity theorem for real-analytic functions they are identical wherever both are defined.

\textbf{Derivative formula and monotonicity.} \newline
We prove items 1-2. Define the Lagrangian
\[
\mathcal{L}(\mathbf{u}, \mathbf{w}, \lambda) \eqbyd V(x_0, \mathbf{u}, \mathbf{w}) - (\lambda/2)(|\mathbf{w}|^2 - \alpha)
\]
For $\lambda \in \mathcal{S}^\circ$, from Proposition 14.a from Rawlings et al.\citep{rawlings:mannini:kuntz:2024,rawlings:mannini:kuntz:2024b} the stacked problem $\min_{\mathbf{u}} \max_{\mathbf{w}} \mathcal{L}(\mathbf{u}, \mathbf{w}, \lambda)$ admits a stationary point $\begin{bmatrix} \mathbf{u} \\ \mathbf{z} \end{bmatrix}^*(\lambda)$ satisfying
\begin{equation}
\mathcal{M}(\lambda) \begin{bmatrix} \mathbf{u} \\ \mathbf{z} \end{bmatrix}^*(\lambda) = -\begin{bmatrix} \mathcal{B}' \\ \mathcal{G}' \end{bmatrix} \mathcal{Q}\mathcal{A} \, x_0
\label{eq:stacked-stationary}
\end{equation}
The Lagrangian stationary point $\mathbf{z}^*(\lambda)$ can also be computed via the recursive gains. Define
\begin{align*}
K_k(\lambda) &\eqbyd -\begin{bmatrix} I & 0 \end{bmatrix} M_k(\lambda)^{-1} d_k(\lambda) \\
J_k(\lambda) &\eqbyd -\begin{bmatrix} 0 & I \end{bmatrix} M_k(\lambda)^{-1} d_k(\lambda)\\
F_k(\lambda) &\eqbyd A + BK_k(\lambda) + GJ_k(\lambda) \\
\Phi_{k,j}(\lambda) &\eqbyd F_{k-1}(\lambda)F_{k-2}(\lambda)\cdots F_j(\lambda)
\end{align*}
for $j < k$ and $\Phi_{k,k}(\lambda) \eqbyd I$, and
\[
\tilde{J}(\lambda) \eqbyd \begin{bmatrix}
J_0(\lambda) \\
J_1(\lambda) \Phi_{1,0}(\lambda) \\
\vdots \\
J_{N-1}(\lambda) \Phi_{N-1,0}(\lambda)
\end{bmatrix}
\]
From \cref{prop:ndsignal}, at each stage $k$ the stationary disturbance component satisfies $z_k^*(\lambda) = J_k(\lambda) x_k$ where $x_k = \Phi_{k,0}(\lambda) x_0$ is the state at stage $k$ under the closed-loop dynamics $x_{k+1} = F_k(\lambda) x_k$. Stacking yields $\mathbf{z}^*(\lambda) = \tilde{J}(\lambda) x_0$, which is well-defined on all of $\mathcal{D}$ since each $M_k(\lambda)$ is invertible there. On $\mathcal{S}^\circ$, the recursive formula coincides with the stacked expression \eqref{eq:stacked-stationary}, as both solve the same stationary conditions.

Define $F(\lambda) \eqbyd \mathcal{L}(\mathbf{u}^*(\lambda), \mathbf{z}^*(\lambda), \lambda)$. Differentiating with respect to $\lambda$ using the chain rule we obtain
\[
\frac{dF}{d\lambda} = \frac{\partial \mathcal{L}}{\partial \lambda} + \frac{\partial \mathcal{L}}{\partial \mathbf{u}} \frac{d\mathbf{u}^*}{d\lambda} + \frac{\partial \mathcal{L}}{\partial \mathbf{w}} \frac{d\mathbf{z}^*}{d\lambda}
\]
all evaluated at $(\mathbf{u}^*(\lambda), \mathbf{z}^*(\lambda), \lambda)$. At the stationary point, we have
\[
\frac{\partial \mathcal{L}}{\partial \mathbf{u}}\bigg|_{(\mathbf{u}^*, \mathbf{z}^*, \lambda)} = 0 \qquad \frac{\partial \mathcal{L}}{\partial \mathbf{w}}\bigg|_{(\mathbf{u}^*, \mathbf{z}^*, \lambda)} = 0
\]
Hence the last two terms vanish and we have
\[
\frac{dF}{d\lambda} = \frac{\partial \mathcal{L}}{\partial \lambda}\bigg|_{(\mathbf{u}^*, \mathbf{z}^*, \lambda)}
\]
Computing the partial derivative of $\mathcal{L}$ with respect to $\lambda$, we obtain
\[
\frac{\partial \mathcal{L}}{\partial \lambda}(\mathbf{u}, \mathbf{w}, \lambda) = -\frac{1}{2}(|\mathbf{w}|^2 - \alpha)
\]
Evaluating at the stationary point $(\mathbf{u}^*(\lambda), \mathbf{z}^*(\lambda), \lambda)$ we have
\[
\frac{dF}{d\lambda} = -\frac{1}{2}(|\mathbf{z}^*(\lambda)|^2 - \alpha) = \frac{1}{2}(\alpha - |\mathbf{z}^*(\lambda)|^2)
\]
Therefore, on $\mathcal{S}^\circ$, we obtain
\[
\frac{dL}{d\lambda} = \frac{1}{\alpha}\frac{dF}{d\lambda} = \frac{1}{2} - \frac{1}{2}\frac{|\mathbf{z}^*(\lambda)|^2}{\alpha}
\]
Since both sides are rational functions of $\lambda$ that agree on $\mathcal{S}^\circ$, by the identity theorem \eqref{eq:dL-dlambda} holds for all $\lambda \in \mathcal{D}$.

From \cref{prop:ndsignal}, $L(\lambda)$ is convex on $\mathcal{D}$. Since $L(\lambda)$ is real-analytic on $(\lambda_1, \infty)$, it is differentiable there, and by convex analysis $dL/d\lambda$ is nondecreasing on $(\lambda_1, \infty)$. From Part (i), $\tilde{J}(\lambda)$ has rational entries in $\lambda$ with no poles on $\mathcal{D}$, hence $dL/d\lambda$ is continuous on $\mathcal{D}$. Define $L'(\lambda_1) \eqbyd \lim_{\lambda \downarrow \lambda_1} dL/d\lambda$. Then $dL/d\lambda$ is nondecreasing on $[\lambda_1, \infty)$.

\textbf{Optimality conditions.} \newline
We prove item 3. Since $L(\lambda)$ is convex on $\mathcal{D}$ and coercive as $\lambda \to \infty$ (from $L(\lambda) \geq \lambda/2$), a minimizer $\lambda^* \in \mathcal{D}$ exists by the Weierstrass theorem. The minimizer satisfies $\lambda^* = \lambda_1$ if and only if $(dL/d\lambda)|_{\lambda = \lambda_1} \geq 0$.

From \eqref{eq:dL-dlambda}, we obtain
\[
\frac{dL}{d\lambda}\bigg|_{\lambda = \lambda_1} = \frac{1}{2}\left(1 - \frac{|\mathbf{z}^*(\lambda_1)|^2}{\alpha}\right) \geq 0 \quad \Longleftrightarrow \quad |\mathbf{z}^*(\lambda_1)|^2 \leq \alpha
\]
Therefore $\lambda^* = \lambda_1$ if and only if $|\mathbf{z}^*(\lambda_1)|^2 \leq \alpha$. Otherwise, $(dL/d\lambda)|_{\lambda_1} < 0$ and the minimizer satisfies $\lambda^* > \lambda_1$ with $(dL/d\lambda)|_{\lambda^*} = 0$, i.e., $|\mathbf{z}^*(\lambda^*)|^2 = \alpha$.
\end{proof}
\end{document}